\definecolor{Darkblue}{rgb}{0,0,0.4}
\definecolor{Brown}{cmyk}{0,0.81,1.,0.60}
\definecolor{Purple}{cmyk}{0.45,0.86,0,0}
\newcommand{\mydriver}{hypertex}
 \renewcommand{\mydriver}{pdftex}
\newcommand{\lref}[2][]{\hyperref[#2]{#1~\ref*{#2}}}
\Crefname{algocf}{Algorithm}{Algorithms}
\crefname{algocfline}{line}{lines}
\Crefname{invariant}{Invariant}{Invariants}
\Crefname{claim}{Claim}{Claims}
\Crefname{subclaim}{Subclaim}{Subclaims}
\newtheorem*{theorem*}{Theorem}
\newtheorem{theorem}{Theorem}[section]
\newtheorem{definition}[theorem]{Definition}
\newtheorem{lemma}[theorem]{Lemma}
\newtheorem{fact}[theorem]{Fact}
\newtheorem{claim}[theorem]{Claim}
\newtheorem{corollary}[theorem]{Corollary}
\newcommand{\junk}[1]{}
\newcommand{\ignore}[1]{}
\newcommand{\R}[0]{{\ensuremath{\mathbb{R}}}}
\newcommand{\Z}[0]{{\ensuremath{\mathbb{Z}}}}
\newcommand{\argmin}{\operatorname{argmin}}
\newcommand{\sse}{\subseteq}
\newcommand{\cA}{{\mathcal{A}}}
\newcommand{\cH}{{\mathcal{H}}}
\newcommand{\cR}{{\mathcal{R}}}
\newcommand{\cS}{{\mathcal{S}}}
\newcommand{\cU}{{\mathcal{U}}}
\newcommand{\cV}{{\mathcal{V}}}
\newcommand{\cX}{{\mathcal{X}}}
\newcommand{\EE}{{\mathbb{E}}\,}
\newcommand{\ones}{\bm{1}}
\newcommand{\bigblacktriangleup}{\scalebox{1.3}{$\blacktriangle$}}
\newcommand{\fullsimplex}{\bigblacktriangleup}
\newcommand{\scalesmpx}{\overline \fullsimplex}
\newcommand{\OPT}{\textup{\textsc{Opt}}\xspace}
\newcommand{\ALG}{\textup{\textsc{Alg}}\xspace}
\newcommand{\AUG}{\textup{\textsc{Aug}}\xspace}
\newcommand{\LP}{\textup{\textsc{LP}}\xspace}
\newcommand{\coins}{\omega}
\newcommand{\augoco}{\textup{\textsc{Mono-OCO}}\xspace}
\newcommand{\stococo}{\textup{\textsc{OCO-Alg}}\xspace}
\newcommand{\gain}{\ensuremath{{\sf gain}}}
\newcommand{\realgain}{\ensuremath{{\sf Rgain}}}
\newcommand{\aug}{\ensuremath{{\sf aug}}}
\newcommand{\ip}[1]{\langle #1 \rangle}
\newcommand{\E}[2]{\underset{#1}{\mathbb{E}}\left[#2\right]}
\newcommand{\ind}[1]{\mathbbm{1}\!\left(#1\right)}
\newcommand{\bs}[1]{\boldsymbol{#1}}
\newcounter{note}[section]
\newcommand{\nf}{\nicefrac}
\newcommand{\Suffix}{\cU}
\newcommand{\Uncov}{\cV}
\newcommand{\optest}{\textsc{Est}}
\newcommand{\approxfactor}{\alpha}
\newcommand{\density}{\rho}
\newcommand{\eat}[1]{}
\newcommand{\LoC}{\textsc{LearnOrCover}\xspace}
\begin{document}

\title{A Learning Perspective on Random-Order Covering Problems}
    \author{Anupam Gupta\thanks{New York University, USA (anupam.g@nyu.edu).}
    \and Marco Molinaro\thanks{Microsoft Research -- Redmond, USA and PUC-Rio, Brazil
  (mmolinaro@microsoft.com).}
  \and Matteo Russo\thanks{Sapienza University of Rome, Italy
  (mrusso@diag.uniroma1.it).}}

\date{}

\maketitle

\thispagestyle{empty}

\begin{abstract}
  In the random-order online set cover problem, the instance with $m$
  sets and $n$ elements is chosen in a worst-case fashion, but then
  the elements arrive in a uniformly random order. Can this
  random-order model allow us to circumvent the bound of
  $O(\log m \log n)$-competitiveness for the adversarial arrival order
  model? This long-standing question was recently resolved by
  \cite{GuptaKL21}, who gave an algorithm that achieved an
  $O(\log mn)$-competitive ratio. While their \LoC was inspired by
  ideas in online learning (and specifically the multiplicative
  weights update method), the analysis proceeded by showing progress from
  first principles.

  \medskip
  In this work, we show a concrete connection between random-order set
  cover and stochastic mirror-descent/online convex optimization. In
  particular, we show how additive/multiplicative regret bounds for
  the latter translate into competitiveness for the former.
  Indeed, we give a clean recipe for this translation, allowing us to
  extend our results to covering integer programs, set multicover, and
  non-metric facility location in the random order model, matching
  (and giving simpler proofs of) the previous applications of the \LoC
  framework.
\end{abstract}

\newpage
\pagenumbering{arabic} 

\section{Introduction}
\label{sec:introduction}

In the online \emph{set cover} problem, the adversary chooses a set system
$(\cU, \cS)$, and reveals the elements of the universe one by
one. Upon seeing an element $e \in \cU$, the algorithm learns which
sets contain this element, and must ensure that $e$ is covered, i.e.,
it has picked at least one of the sets containing element $e$. The goal is to
pick the fewest sets, or the cheapest collection of sets, if sets have
non-negative costs. The offline version of this problem admits a
$(1 + \ln n)$-approximation, where $n = |\cU|$, and this is best possible
(up to lower order terms) unless $\text{P}=\text{NP}$ (see, e.g.,
\cite{DBLP:books/daglib/0030297}). However, the algorithms that
achieve this approximation guarantee require knowing the set system
up-front. What can we do in an online setting?

In a landmark paper, Alon et
al.~\cite{DBLP:journals/siamcomp/AlonAABN09} gave an
$O(\log m \log n)$-competitive randomized online algorithm; they also
showed that one cannot do much better using deterministic algorithms,
even if the set structure was known up-front, but only a subset of the
elements of $\cU$ would arrive online. Subsequently, Korman~\cite{korman2004use} showed that no polynomial-time algorithms could
beat this double-logarithmic bound.  The question of doing better
beyond the worst-case was considered soon thereafter: Grandoni et
al.~\cite{DBLP:journals/siamcomp/GrandoniGLMSS13} gave an
$O(\log mn)$-competitive algorithm if the set system was known, and
the requests were drawn i.i.d.\ from an fixed distribution over the
elements of $\cU$.

The question of extending it to the random-order model remained open
for considerably longer. It was finally resolved when an
$O(\log mn)$-competitive algorithm was given even when the set system
was fixed and unknown, but the elements arrived in uniformly random
order~\cite{GuptaKL21}.  They also generalized their techniques to
covering integer programs without box-constraints, set multicover, and
non-metric facility location~\cite{GuptaKL21,GuptaKL24}.  These
random-order algorithms (whom we refer to as the \LoC family) were all
based on the multiplicative weights update method
to learn the optimal solution using the random-order samples. They
were inspired by stochastic gradient/mirror descent---that of using the
elements arriving in random order---as random samples, from which to
learn a good solution.

However, this connection between \LoC algorithms and online learning
was more in spirit than at a technical level: the proofs in
\cite{GuptaKL21,GuptaKL24} proceeded from first-principles, showing
that the KL divergence to the optimal solution plus the (logarithm of
the) number of uncovered elements decreased in expectation fast
enough. Again, the guiding principle was that these algorithms either
make progress towards learning the optimal solution, or towards
covering elements, at each step. However, each problem required a
somewhat involved potential-function calculation, and it was difficult
to see the intuition for how the algorithmic decisions were informing
the technical details and the convergence proofs.

\subsection{Our Results}

In this work, we make the connection between random-order online algorithms and online learning—particularly regret minimization—explicit, modular, and arguably conceptually cleaner. We isolate the online learning component, showing that any (stochastic) online convex optimization algorithm with suitable additive/multiplicative regret bounds can be used as a black box subroutine to yield an online algorithm with optimal competitiveness guarantees.

This general approach yields the following unified result:

\begin{theorem*}
There exist $O(\log mn)$-competitive algorithms for the following random-order covering problems:
    \begin{enumerate}[nosep,label=(\roman*)]
        \item (weighted) Set Cover,
        \item (unweighted) Set Multicover,
        \item Covering Integer Programs (without box constraints), and
        \item Non-metric Facility Location.
    \end{enumerate}
\end{theorem*}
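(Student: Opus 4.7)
The plan is to give a single recipe that, for each of the four random-order covering problems, (i) formulates an instance of stochastic online convex optimization over the LP feasible region, (ii) invokes an OCO algorithm with a multiplicative/additive regret guarantee, and (iii) converts that guarantee into a competitive ratio via an online rounding that charges uncovered rounds to regret. I would first fix an estimate $\alpha$ of the LP optimum $\OPT$ (doubling the guess whenever the algorithm fails loses only a constant factor), and consider a learner that plays a fractional solution $y_t$ in the scaled feasible polytope $\{y \geq 0 : c^\top y \leq \alpha\}$, updated by mirror descent with the KL Bregman divergence---i.e., multiplicative weights.

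For each arriving constraint (element $e_t$ in set cover or multicover, covering constraint of a CIP, or client in facility location) indexed by $i_t$ with row $a_{i_t}$ and right-hand side $b_{i_t}$, define the per-round convex loss
\begin{equation}
\ell_t(y) \;=\; -\log\min\!\Big(1,\; \tfrac{a_{i_t}^\top y}{b_{i_t}}\Big),
\end{equation}
normalized so that any LP solution $x^*$ of cost at most $\alpha$ with $a_{i_t}^\top x^* \geq b_{i_t}$ satisfies $\ell_t(x^*) = 0$. The mirror-descent guarantee yields a multiplicative regret bound of the form
\begin{equation}
\sum_{t \leq T} \ell_t(y_t) \;\leq\; (1+\eta)\sum_{t \leq T} \ell_t(x^*) + \frac{D_{\mathrm{KL}}(x^* \| y_1)}{\eta} \;=\; O(\alpha \log m),
\end{equation}
after tuning $\eta$, because $\ell_t(x^*) = 0$ and the initial divergence is $O(\alpha \log m)$ from a uniform start in the scaled simplex. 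This is the place where prior \LoC potential-function arguments are replaced by an off-the-shelf regret inequality.

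To turn this into a competitive ratio, pair the learner with an online rounding routine: at each step, if $a_{i_t}^\top y_t / b_{i_t} \geq 1/2$, perform boosted independent randomized rounding of $y_t$ (scaling probabilities by $\Theta(\log mn)$ to ensure the current constraint is integrally satisfied with high probability), contributing at most $O(\alpha \log mn / T)$ in expected cost per such step, totalling $O(\alpha \log mn)$. Otherwise $\ell_t(y_t) \geq \log 2$, so the regret bound caps the number of \emph{bad} rounds at $O(\alpha \log m)$; in random order, the expected incremental cost of covering each bad constraint by a single set/facility is at most $O(\OPT/(T-t))$, whose sum over $O(\alpha \log m)$ random positions is $O(\alpha \log mn)$ by the standard random-order suffix argument. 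Combining both pieces yields $\EE[\ALG] = O(\OPT \log mn)$.

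The expected main obstacle is the non-metric facility location case, where the variables live in a product space (facility openings $y_F$ and client-to-facility assignments $z_{F,c}$) with coupled constraints and non-uniform costs. There the loss must be chosen so that (a) the opening and assignment components can be updated by independent mirror-descent steps, yet (b) their combined regret still bounds the number of rounds in which some client is poorly served, without double-counting $\log m$ factors. Once this is engineered, weighted set cover, multicover, and CIPs without box constraints all follow by plugging in the appropriate covering matrix $(a_i, b_i)$ and cost vector $c$ into the same recipe, recovering the \cite{GuptaKL21,GuptaKL24} results with conceptually simpler proofs driven by the black-box regret bound.
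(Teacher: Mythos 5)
Your high-level plan (guess-and-double on the LP value, a mirror-descent learner over the scaled polytope $\{y \ge 0 : c^\top y \le \alpha\}$, and a multiplicative/additive regret bound replacing the \LoC potential calculations) is indeed the same skeleton as the paper's, but the conversion from regret to competitive ratio does not go through as written. Two concrete gaps. First, the loss $\ell_t(y) = -\log\min(1, a_{i_t}^\top y / b_{i_t})$ has unbounded value and unbounded gradient as $a_{i_t}^\top y \to 0$; the regret inequality you invoke carries a local-norm/variance term that requires bounded gradients (this is exactly the hypothesis $H_{t,i}\cdot(M/c_i)\in[0,1]$ in \Cref{thm:regretConcaveScaled}), so the claimed $O(\alpha\log m)$ bound on $\sum_t \ell_t(y_t)$ is not justified without clipping or some other device you do not supply. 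Second, and more seriously, the cost accounting fails on both branches. On ``good'' rounds, rounding $y_t$ boosted by $\Theta(\log mn)$ so that the \emph{current} constraint is integrally satisfied w.h.p.\ costs $\Theta(\log mn)\cdot \sum_{S \ni e_t} c_S y_{t,S}$ in expectation, which can be $\Theta(\alpha \log mn)$ in a single round; since mirror-descent iterates are not monotone you also cannot charge only increments as in adversarial primal--dual roundings, so there is no mechanism making the per-step cost $O(\alpha\log mn/T)$ while still covering the arriving constraint. On ``bad'' rounds, the claim that covering the arriving constraint costs $O(\OPT/(T-t))$ is unsupported: the cheapest augmentation for a single element can cost up to $c(\OPT)$, and the sum of cheapest-augmentation costs over the remaining elements can be as large as $n\cdot c(\OPT)$ early in the sequence---controlling precisely this quantity is the crux of the problem, not something the random-order suffix argument gives for free. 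The facility-location case is, by your own admission, left unresolved.

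The paper's resolution is structurally different at exactly these points, and you may find the contrast instructive. It never attempts to satisfy the arriving constraint by rounding the learner's iterate: it always buys a cheap backup of cost $\approx \kappa^t_{e^t}$ (the minimum augmentation cost of the arriving element), samples extra resources at rate $(\kappa^t_{e^t}/\optest)\, p^t_S$ so that the expected sampling cost per round is at most the backup cost, and feeds the OCO a \emph{bounded-gradient} linearized gain $\gain_{t,e^t}$ rather than a log loss. The charging then runs through three links (\Cref{lem:gain-oco-application}): the potential $\Phi^t = \sum_{e \in \Suffix^t} \kappa^t_e$ decays multiplicatively, so the achieved gains sum to $O(\log n)$ before the stopping time; the regret bound makes the achieved gains comparable to those of the LP optimum $p^\star$ up to $O(\log m)$; and ``sufficient static gains'' bounds the algorithm's per-round cost by $O(c(\OPT))\cdot \gain_t(p^\star)$. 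Your proposal has no analogue of the last link or of the potential argument, and for non-metric facility location the paper needs only opening variables with a gain built from facilities that halve a client's augmentation cost, avoiding the coupled assignment variables you flag as an obstacle. As written, the proposal therefore does not establish the theorem.
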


While prior work \citep{GuptaKL21, GuptaKL24} established the existence of $O(\log mn)$-competitive algorithms for these problems, our contribution is a conceptual simplification: \emph{the same} OCO algorithm---used entirely as a black box---applies uniformly across all settings. The only problem-specific component are the concave gain functions fed to the OCO routine.

\subsection{Our Techniques}
\label{sec:our-techniques}

To illustrate the ideas, consider the case of unweighted online set
cover, in the random-order model. Recall that while the adversary
fixes the set system $(\cU, \cS)$ up-front, we know nothing apart from
$m$, the number of sets in $\cS$, at the beginning. (For this
particular discussion, assume also that we know $|\OPT|$, the cost of the
optimal solution.) Crucially, we do not know which elements belong to
which sets: only when an element arrives we known which sets contain it. 

As mentioned above, our algorithms follow the \LoC framework
of~\cite{GuptaKL21}. In this framework, at each timestep $t$ we maintain a probability distribution
$q^t \in [0,1]^m$ over the sets in $\cS$. If the arriving element $e^t$ is not yet covered, we add to our solution any
set that covers it, and, crucially, sample and add to our solution one
more set according to the distribution $q^t$. The intuition for the sampling is that,
by paying a cost of two (sets) instead of one in this iteration, we will help
covering \emph{future} elements.

The key is how to choose---or indeed, ``learn''---the distribution
$q^t$, after we have seen $t-1$ elements from $\cU$, given that we do
not know which sets cover the future elements. Let $\Uncov^t$
denote the elements of $\cU$ which are yet to arrive and have not been
already covered by sets chosen at steps $1, \ldots, t-1$. Suppose we
define
$\gain_{t}(q) = \sum_{S \in \cS} q_S \cdot \frac{\textrm{\# elements $S$ covers
    from $\Uncov^t$}}{|\Uncov^t|}$; this measures exactly the fraction
of uncovered future elements that we expect to cover if we sample a
set using the distribution $q$. Note that, since we do not know the set
structure, we do not know the function $\gain_t(q)$---but more about
that soon.

Now, if we magically knew the optimal solution $\OPT \sse \cS$, we
could use the uniform distribution
$q^\star := \frac{\bs{1}(S \in \OPT)}{|\OPT|}$ over $\OPT$'s sets and
get that $\gain_{t}(q^\star) \ge \frac{1}{|\OPT|}$. And hence, after
about $\log n \cdot |\OPT|$ steps sampling from this distribution, we
would reduce the fraction of uncovered elements to less than
$(1 - \frac{1}{|\OPT|})^{(\log n) |\OPT|} \le \frac{1}{n}$, i.e., all
$n$ elements would be covered!

There are two obvious issues with this thought experiment:
\begin{enumerate}
\item We do not know the optimal solution, and hence the fractional
  solution $q^\star$. The key idea here is to use \emph{online convex
    optimization} to learn $q^\star$. More precisely, OCO allows us to
  choose a sequence of distributions $\{q^t\}_{t=1}^n$ that perform
  almost as well as $q^\star$ for the expected coverage; namely
  \[ \sum_t \gain_{t}(q^t) \gtrsim \frac{1}{2} \sum_t \gain_{t}(q^\star) -
    O(\log m); \] we use a multiplicative/additive regret
  guarantee to avoid the typical dependence of $\sqrt{n}$. 
\item We do not know the function $\gain_t(\cdot)$, since we do not
  know which sets contain which future elements and hence we don't
  know $\Uncov^t$.)  But here's the second key observation: if we see
  a random uncovered element $e^t$ at timestep $t$, we can use it to
  compute an unbiased stochastic approximation of (the gradient of)
  this function $\gain_t$; this is where we rely on the random-order
  model. More precisely, the linear function
  \[ \gain_{t,e^t}(q) = \sum_S q_S \cdot \bs{1}(e^t \in \Uncov^t
    \textrm{ and } e^t \in S)\] satisfies
  $\EE_{e^t} [\,\gain_{t,e^t}(q)\,] = \gain_{t,\Uncov^t}(q)$; moveover, when
  $e^t$ arrives, we know whether it is uncovered, and which sets cover
  it, so we can compute $\gain_{t,e^t}$ using the information at hand,
  and feed this stochastic gradient to the OCO algorithm.
\end{enumerate}
\medskip

While the approach sketched above is true in spirit to our algorithm
(which is essentially the same for all problems),
the details have some crucial technical differences. Firstly, instead
of maintaining a probability distribution $q^t$ over sets/resources,
we maintain a vector $p^t \in [0,1]^m$ (whose coordinates do not need to
add up to 1), and may sample multiple sets/resources while controlling for
their expected costs. Secondly, instead of tracking the
number/fraction of uncovered future elements, we use a potential
function that takes into account the cost of covering these
elements. Finally, the ``real'' gain, which is the actual decrease in the
potential, is somewhat complicated and so we have to ``linearize''
before using it in our OCO algorithm. Moreover, to ensure that these
OCO algorithms suffer low-regret, it is crucial that this linearized
gain functions have bounded gradients.

\medskip\textbf{Paper Organization:} In \Cref{sec:warmup} we give our
algorithm for general set cover; the algorithm and the proof of
correctness captures most of the nuances sketched in the previous
paragraph. We then abstract out the framework in \Cref{sec:framework},
and apply this framework in \Cref{sec:applications} to unweighted set
multicover, to covering integer programs, and to the non-metric
facility-location problem.

\subsection{Related Work}
\label{sec:related-work}

Random order models have seen considerable research over the years;
see \cite{DBLP:books/cu/20/Gupta020} for a survey and the historical
perspective. The problems that have been considered include set
cover~\cite{GuptaKL21,GuptaKL24}, but also online resource allocation
\cite{AWY14,MR12,kesselheim2014primal,AgrawalD15,GM16,Molinaro-SODA17},
load-balancing~\cite{im2024online}, network
design~\cite{meyerson2001designing}, facility
location~\cite{DBLP:conf/focs/Meyerson01,DBLP:conf/soda/KaplanNR23}
and scheduling problems
\cite{albers2020scheduling,DBLP:conf/fsttcs/AlbersJ21,AlbersGJ23}.

Online learning is a vast field, see,
e.g.,~\cite{DBLP:journals/ftml/Bubeck15,OCObook,vishnoi2021algorithms,orabona-v7,bansal2019potential};
we specifically draw on \cite{orabona-v7} for the
additive/multiplicative form of regret best suited for our
analysis. Connections between multiplicative-weights/mirror descent
and online algorithms have been used in prior works (see,
e.g.,~\cite{DBLP:journals/fttcs/BuchbinderN09,DBLP:conf/esa/BuchbinderCN14,DBLP:conf/stoc/BubeckCLLM18,DBLP:conf/soda/BuchbinderGMN19}
for problems in the adversarial arrival model, and
\cite{AgrawalD15,GM16} for random order models); however, the
connection to stochastic mirror descent we present in this work is
new, to the best of our knowledge.

\section{Online Convex Optimization}\label{sec:prelims}

Online Convex Optimization (OCO) is a fundamental model of optimization under uncertainty with fascinating applications to learning and algorithms that generalizes the classic problem of ``prediction with expert advice'' to convex loss (or concave gain) functions~\cite{DBLP:journals/toc/AroraHK12,orabona-v7}. Multiple OCO models are available, but for our purposes we focus on adversarial arrivals and stochastic first-order oracle feedback.

\begin{definition}[OCO with Stochastic First-Order Feedback]
  Let $M$ be a nonnegative real number and consider the scaled simplex
  $\scalesmpx = \{y \in [0,1]^d \mid \ip{c,y} \le M\}$. At each time, we
  need to play an action $y_t \in \scalesmpx$ with respect to an
  \emph{a priori} unknown concave function $h_t$, to maximize the long-term
  gain $\sum_t h_t(y_t)$. However, the feedback is stochastic: instead
  of getting a subgradient $\partial h_t(y_t) \in \R^{d}$ at timestep
  $t$, we get a random estimate $H_t \in \R^{d}$ such that its expectation conditioned on everything that happened up to this point (including $y_t$ and $h_t$) equals $\partial h_t(y_t)$.
\end{definition}

The key results in this area are online strategies that obtain gain comparable to the best action $y^\star \in \scalesmpx$ that knows all the gain functions in hindsight. We will make use of the following such result; because of its specifics (e.g., multiplicative/additive guarantee, scaled simplex, availability of only stochastic subgradient, etc.) we did not find this exact result in the literature, but  it follows from known regret arguments based on local norms (Appendix \ref{sec:stoc-OMD}).

\begin{restatable}[OCO Algorithm]{theorem}{thmregretConcaveScaled} \label{thm:regretConcaveScaled}
  Suppose the functions $h_t$ are concave and the stochastic gradients are ``$\ell_\infty$-bounded'': i.e.,
  $H_{t,i} \cdot (\nicefrac{M}{c_i}) \in [0,1]$ for each coordinate
  $i \in [d]$, with probability $1$. Then for any $\eta \in (0,1]$ and any stopping time $\tau$,
  the stochastic online mirror descent (OMD) algorithm ensures
  that for all $y^\star \in \scalesmpx$, we have
  \begin{align*}
    \EE\bigg[\sum_{t \le \tau} h_t(y_t)\bigg] \geq (1-\eta) \cdot \EE\bigg[\sum_{t \le \tau} h_t(y^\star)\bigg] - \frac{O(\log d)}{\eta}.
  \end{align*}
\end{restatable}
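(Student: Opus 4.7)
The plan is to run stochastic online mirror ascent on $\scalesmpx$ with a suitably re-weighted negentropy regularizer, and to extract the multiplicative/additive regret bound from a \emph{local-norm} stability analysis in the spirit of~\cite{orabona-v7}. The conceptual point is that the hypothesis $H_{t,i}\cdot(M/c_i)\in[0,1]$ is precisely what turns the second-order stability term produced by mirror descent into a linear multiple of $\langle H_t, y_t\rangle$; this is what collapses the usual $O(\sqrt{T\log d})$ additive regret into the multiplicative/additive form with comparator factor $(1-\eta)$ and additive penalty $O(\log d)/\eta$.

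Concretely, I would take the regularizer $R(y)=\sum_i (c_i/M)\,y_i\log y_i$, whose inverse Hessian at $y_t$ is $\operatorname{diag}(y_{t,i}\,M/c_i)$, and initialize at the $R$-minimizer inside $\scalesmpx$ so that $D_R(y^\star,y_1)=O(\log d)$ uniformly over $y^\star\in\scalesmpx$. The update is $y_{t+1}=\arg\max_{y\in\scalesmpx}\{\eta\langle H_t,y\rangle - D_R(y,y_t)\}$. The standard one-step mirror-descent inequality then yields, deterministically,
\begin{equation*}
\eta\,\langle H_t,\, y^\star - y_t\rangle \;\le\; D_R(y^\star,y_t)-D_R(y^\star,y_{t+1}) \;+\; \eta^2\sum_i (M/c_i)\,y_{t,i}\,H_{t,i}^2.
\end{equation*}
The hypothesis $H_{t,i}\le c_i/M$ gives $(M/c_i)H_{t,i}^2\le H_{t,i}$, so the stability sum is at most $\langle H_t,y_t\rangle$. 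Summing over $t\le T$, telescoping the Bregman divergences, and rearranging yields the deterministic linear-regret bound $\sum_{t\le T}\langle H_t,y_t\rangle \ge (1-\eta)\sum_{t\le T}\langle H_t,y^\star\rangle - O(\log d)/\eta$.

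The final step passes from linear gains to the concave statement. Concavity gives $h_t(y^\star)-h_t(y_t)\le\langle \partial h_t(y_t), y^\star-y_t\rangle$ pointwise; combining with the tower property and unbiasedness $\EE[H_t\mid y_t,h_t]=\partial h_t(y_t)$ lets me replace $\EE[\langle H_t,\cdot\rangle]$ by $\EE[\langle \partial h_t(y_t),\cdot\rangle]$ in the previous bound. Since the deterministic linear-regret inequality holds for every fixed $T$, applying it at $T=\tau$ with optional stopping---the per-step quantities are uniformly bounded under the hypothesis---then gives the stopping-time form stated in the theorem.

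The main obstacle is that $\scalesmpx$ is not a simplex but the intersection of the cube $[0,1]^d$ with a halfspace, so the Bregman geometry must be weighted by $c_i/M$ in a way that \emph{simultaneously} (i) keeps the diameter $\sup_{y^\star}D_R(y^\star,y_1)$ equal to $O(\log d)$ and (ii) aligns the local-norm bound $\sum_i (y_{t,i}M/c_i)H_{t,i}^2 \le \langle H_t, y_t\rangle$ with the hypothesis $H_{t,i}\cdot(M/c_i)\in[0,1]$. Verifying both for the weighting $w_i=c_i/M$, and carefully handling the mirror-projection boundary cases when a coordinate saturates at $y_i=1$ or the budget $\langle c,y\rangle=M$ is tight, is the delicate part of the argument.
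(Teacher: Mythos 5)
Your proposal follows essentially the same route as the paper's proof: an OMD local-norm stability argument with negentropy, using the $\ell_\infty$-boundedness of the stochastic gradients to turn the stability term into $O(\eta)\,\ip{H_t,y_t}$, then passing from the linear surrogate to the concave gains and to the stopping time via conditional unbiasedness. The paper changes coordinates via $x_i := c_i y_i/M$ and works on a subset of the unit simplex with the unnormalized entropy regularizer; you keep the action space $\scalesmpx$ and weight the regularizer by $c_i/M$. These are formally equivalent.

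Two steps need repair. First, the claim that the $R$-minimizer inside $\scalesmpx$ yields $D_R(y^\star,y_1)=O(\log d)$ is not correct in general. With $R(y)=\sum_i (c_i/M)y_i\log y_i$, the unconstrained minimizer is $y_1=(1/e)\ones$; if it is feasible, $D_R(y^\star,(1/e)\ones)$ contains the term $\sum_i c_i/(Me)$, which can greatly exceed $\log d$, and if instead the budget constraint is active the minimizer has equal coordinates $M/\sum_j c_j$, for which $D_R$ contains a term like $\sum_i x^\star_i\log\big(\sum_j c_j/c_i\big)$ that blows up when some $c_i$ is tiny. The correct initialization takes $y_{1,i}\propto 1/c_i$ (i.e., $x_0=(1/d)\ones$ in the transformed coordinates), and it is \emph{for that point} that the uniform $O(\log d)$ diameter holds. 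Second, the passage from the linear regret bound on $\ip{\partial h_t(y_t),\,\cdot\,}$ to the stated bound on $\sum_t h_t(\cdot)$ with the $(1-\eta)$ factor is glossed over. Starting from $\sum_t\ip{\partial h_t(y_t),y_t}\ge(1-\eta)\sum_t\ip{\partial h_t(y_t),y^\star}-O(\log d)/\eta$, one must combine the concavity inequality $h_t(y^\star)-h_t(y_t)\le\ip{\partial h_t(y_t),y^\star-y_t}$, weighted by $(1-\eta)$, with the inequality $h_t(y_t)\ge\ip{\partial h_t(y_t),y_t}$, weighted by $\eta$; the latter requires $h_t(0)\ge0$, i.e., nonnegativity of the gain. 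Without that extra inequality the $(1-\eta)$ factor does not transfer from the linear surrogate to $h_t$ itself.
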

Note that the randomness is over the subgradient estimates, which in
turn affect our iterates and thus possibly the future
$h_t$'s. We will use \Cref{thm:regretConcaveScaled} with $\eta = \nf12$.

\section{Warm-Up: Random-Order (Weighted) Set Cover, Revisited}
\label{sec:warmup}

In this section, the goal is to show how to use a low-regret Online
Convex Optimization routine in a black-box fashion to obtain an optimal
competitive ratio for the online (weighted) set cover problem, when
the input is presented in random order \citep{GuptaKL21}.

\subsection{The \textsc{ROSC-OCO} Algorithm}
\label{sec:setcover-algo}

As in the introduction, we use $(\cU, \cS)$ to denote the set system of the set cover instance, with $|\cU| = n$ and $|\cS| = m$. We use $c_S$ to denote the cost of set $S \in \cS$. 

Let $\Suffix^t$ be the random subset of elements not seen until time
$t$. 
By using guess-and-double, we assume that we know an over-estimate $\optest$ for the optimal value
$c(\OPT)$, namely $c(\OPT) \le \optest \le 2c(\OPT)$; more about this in \Cref{sec:guess}. We then let
$\scalesmpx := \{p \in [0,1]^m \mid \ip{c,p} \leq
\optest\}$ be the possible vectors of sampling weights over the sets $\cS$ with expected cost at most $\optest$.

Finally, let $\kappa_e$ denote the cost of the cheapest set containing
$e$. Let $X^t_e$ be the indicator of the event that element
$e \in \cU$ is still uncovered after the first $t-1$ timesteps, and
define $\kappa_e^t := X^t_e \kappa_e$.

We then have \Cref{alg:roscoco}, following the idea described in the introduction, namely we cover (if needed) the incoming element buying one set at minimum cost $\kappa_e^t$ and then sample and buy additional sets based on the scaled sampling weights $(\nf{\kappa^t_{e^t}}{\optest}) \cdot p^t_S$; the weights $p^t$ are then updated via an OCO algorithm. While any black-box OCO algorithm can be used, for concreteness we use the one from \Cref{thm:regretConcaveScaled} with parameter $\eta = \nf12$. 

\begin{algorithm}
  \caption{\textsc{ROSC-OCO}}
  \label{alg:roscoco}
  Define $\Suffix^1 = \cU$ \\
  \ForEach{time $t = 1,2, \ldots$}{
    Obtain the next element $e^t$, which is a uniformly random element of $\Suffix^t$.\\
      
      Cover $e^t$ (if needed) by paying $\kappa^t_{e^t}$ 
      \tcp*{\textsc{Backup}} \label{step:backup-sc}

      Add each set $S \in \cS$ to the cover independently w.p.\
      $(\nf{\kappa^t_{e^t}}{\optest}) \cdot p^t_S$
      \tcp*{\textsc{Sampling}} \label{step:sample-sc}

      Feed $\gain_{t,e^t}(p) = (\nicefrac{\kappa^t_{e^t}}{\optest})\cdot \min(1,\sum_{S \ni e^t} p_S)$ to $\stococo$ to get $p^{t+1} \in
      \scalesmpx$ 
      \tcp*{\textsc{OCO}} \label{step:oco-sc}

      $\Suffix^{t+1} \gets \Suffix^t \setminus \{e^t\}$.
  }
\end{algorithm}

\subsection{Analysis of \textsc{ROSC-OCO}} 
\label{sec:setcover-analysis}

The main guarantee in this section is the following: 
\begin{theorem}\label{thm:rosc-oco}
   \textsc{ROSC-OCO} is $O(\log (mn))$-competitive for random-order (weighted) set cover.
\end{theorem}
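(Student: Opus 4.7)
The plan is to separate the analysis into two pieces, one handled by the OCO regret bound (the ``learning'' part) and one by a covering potential (the ``covering'' part). First observe that at each step $t$, the expected cost (conditional on the history) is at most $2\kappa^t_{e^t}$: the backup on \ref{step:backup-sc} contributes exactly $\kappa^t_{e^t}$, while the sampling on \ref{step:sample-sc} has expected cost $(\nf{\kappa^t_{e^t}}{\optest})\cdot \ip{c,p^t} \le \kappa^t_{e^t}$, using $p^t \in \scalesmpx$. The backup also guarantees feasibility, so it suffices to show $\EE[\sum_{t \le n} \kappa^t_{e^t}] = O(\log(mn))\cdot \optest$; the competitive bound then follows from the guess-and-double inequality $\optest \le 2c(\OPT)$.

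To control the backup cost, I apply OCO with the comparator $p^\star := \bs{1}_{\OPT} \in \{0,1\}^m$. It lies in $\scalesmpx$ because $\ip{c,p^\star} = c(\OPT) \le \optest$, and since $\OPT$ covers $\cU$, $\sum_{S \ni e} p^\star_S \ge 1$ for every $e$, so $\gain_{t,e^t}(p^\star) = \kappa^t_{e^t}/\optest$ identically. After checking that the stochastic subgradient of $\gain_{t,e^t}$ at $p^t$ satisfies the boundedness hypothesis of \Cref{thm:regretConcaveScaled} (its nonzero coordinates are only on $S\ni e^t$, for which $\kappa_{e^t} \le c_S$ gives $H_{t,S}(\optest/c_S)\in[0,1]$), I invoke the theorem with $\eta=\nf12$ to get
\[ \EE\Big[\sum_{t} \gain_{t,e^t}(p^t)\Big] \;\ge\; \tfrac12 \EE\Big[\sum_t \kappa^t_{e^t}/\optest\Big] - O(\log m), \]
which rearranges to $\EE[\sum_t \kappa^t_{e^t}] \le 2\optest\cdot \EE[\sum_t \gain_{t,e^t}(p^t)] + O(\log m)\cdot \optest$.

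The main obstacle is then upper bounding $\EE[\sum_t \gain_{t,e^t}(p^t)] = O(\log n)$; this captures the covering progress left unaccounted for by OCO. My plan is a potential argument: define $\Phi^t := \sum_{e \in \Suffix^t} \kappa^t_e$ (the weighted mass of residual uncovered elements; $\Phi^1 \le n\optest$ since each $\kappa_e \le c(\OPT) \le \optest$, and $\Phi^{n+1}=0$) and set $\Psi^t := \log(1 + \Phi^t/\optest)$, so $\Psi^1 = O(\log n)$ and $\Psi^{n+1}=0$. The crux is a per-step inequality $\EE[\gain_{t,e^t}(p^t)\mid \text{hist}^t] \le O(1)\cdot \EE[\Psi^t - \Psi^{t+1}\mid \text{hist}^t]$, which on summing telescopes to $O(\log n)$. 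The key observation making it work is that the \emph{linearized} gain $(\kappa^t_{e^t}/\optest)\min(1,\sum_{S\ni e} p^t_S)$ fed to OCO, though not equal to the \emph{true} coverage probability $1 - \prod_{S\ni e}(1-(\kappa^t_{e^t}/\optest)p^t_S)$, agrees with it up to an $O(1)$ factor via the inequalities $1-e^{-x}\ge (1-1/e)\min(1,x)$ and $\min(1,ab)\ge a\min(1,b)$ for $a=\kappa^t_{e^t}/\optest \le 1$. Averaging over $e^t\sim \Suffix^t$ then ties the expected gain to $\EE[\Phi^t - \Phi^{t+1}]$ (accounting for both the removal of $e^t$ and future elements covered by sampling), and concavity of $\log$ gives $\Psi^t - \Psi^{t+1} \ge (\Phi^t - \Phi^{t+1})/(\optest + \Phi^t)$ for the final bookkeeping.

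Chaining the three steps yields $\EE[\text{cost}] \le 2\EE[\sum_t \kappa^t_{e^t}] \le 4\optest\cdot O(\log n) + O(\log m)\cdot \optest = O(\log(mn))\cdot c(\OPT)$. I expect the main conceptual hurdle to be the potential argument in the third step: carefully relating the OCO-friendly linearized gain to the actual coverage probabilities of the sampling step, and then to a telescoping log-potential on the residual uncovered mass. This is precisely where the specific design choices in \Cref{alg:roscoco}---sampling at rate proportional to $\kappa^t_{e^t}/\optest$ and the $\min(1,\cdot)$ truncation inside the gain---play their essential role, since they are what make the linearization simultaneously concave, with bounded gradient for OCO, and close to the true coverage probability for the potential bound.
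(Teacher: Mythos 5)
Your proposal is correct and follows the same overall architecture as the paper (per-step cost at most $2\kappa^t_{e^t}$, an OCO regret bound against a feasible covering comparator whose gain is exactly $\kappa^t_{e^t}/\optest$, and a potential argument bounding the algorithm's cumulative gain by $O(\log n)$), but it differs genuinely in how the small-potential regime is handled. The paper introduces a stopping time $\tau$ (the first time $\Phi^t \le \optest$), bounds the ``real gain'' $\realgain_t$ only for $t<\tau$ via $\log(\Phi^0/\Phi^{\tau-1}) = O(\log n)$ (Lemmas \ref{lem:gain-sc} and \ref{lem:gain-realgain}), relates cost to $\gain_t(p^\star)$ for the LP optimum (Lemma \ref{lemma:suffSG}), and simply pays $O(\optest)$ for everything after $\tau$; you instead avoid the stopping time entirely by using the shifted potential $\Psi^t = \log(1+\Phi^t/\optest)$ summed over all $n$ steps, crediting both the sampling coverage and the removal of the arriving element, and you use the integral comparator $\bs{1}_{\OPT}$ rather than the LP optimum (which matters only for the polynomial-time guess-and-double discussion in \Cref{sec:guess}, not for the competitiveness statement). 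One point you assert but do not fully work out is the per-step inequality $\gain_t(p^t) \le O(1)\cdot \EE[\Psi^t-\Psi^{t+1} \mid \cH^t]$: it does hold, but it needs the two decrease sources in two regimes --- when $\Phi^t \ge \optest$ the sampling decrease gives $\EE[\Phi^t-\Phi^{t+1}] \ge (1-\nf1e)\,\Phi^t\,\gain_t(p^t)$ and the denominator $\optest+\Phi^t \le 2\Phi^t$; when $\Phi^t < \optest$ one must instead use the removal of $e^t$, which contributes $\Phi^t/|\Suffix^t|$ in expectation, together with the crude bound $\gain_t(p^t) \le \Phi^t/(\optest\,|\Suffix^t|)$ --- after which the telescoping to $\Psi^1 \le \log(1+n)$ goes through. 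With that case analysis made explicit, your route is complete and arguably slightly cleaner, trading the paper's stopping-time bookkeeping for a shifted logarithmic potential.
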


To prove this, consider any timestep $t$, and 
let $\cH^t$ denote the history strictly before time $t$. We claim
that, for any choice of $\cH^t$, the (conditional) expected cost of
the algorithm at time $t$ is at most
$2\EE_{e \sim \Suffix^t}[X^t_e\kappa_e]$. Indeed, the element $e$
arriving at time $t$ is a uniformly random element of $\Suffix^t$, and
if it is not covered, we incur cost $\kappa_e$ in
Line \ref{step:backup-sc}, and an expected cost of at most $\kappa_e$
in Line \ref{step:sample-sc}. The randomness in the algorithm's decisions
at time $t$ is denoted as $\coins^t$ in the remainder.

To track the progress of the algorithm, define the potential as the sum of the minimal individual covering costs for the remaining elements:
\begin{gather}
  \Phi^t := \sum_{e \in \Suffix^t} \kappa^t_e;
\end{gather}
this equals $\sum_{e \in \cU} \kappa^t_e$, since all elements in
$\cU \setminus \Suffix^t$ are already covered by time $t$. Define the
stopping time $\tau$ to be the first time $t$ when
$\Phi^t \leq \optest$. By the discussion above, the expected cost
incurred at all timesteps $t \geq \tau$ is at most
$2\Phi^{\tau} \leq 2\optest$, which is at most $O(c(\OPT))$ by assumption. 

\subsubsection{The ``Real Gain'' captures the Potential Reduction}

We now bound the cost incurred in timesteps $t < \tau$. Define the
``real gain'' at time $t$ to be the expected proportional reduction in
potential due to a random element from $\Suffix^t$:
\begin{gather}
  \realgain_t := \frac{1}{|\Suffix^t|}\cdot \frac{1}{\Phi^t} \cdot
  \sum_{e, f \in \Suffix^t} \kappa^t_f \cdot \Pr[f \text{ covered by
    sampling step $t$}\mid e^t = e]. \label{eq:3}
\end{gather}

The first lemma simply uses this definition and the
inequality $1-z \leq e^{-z}$ to get:
\begin{lemma}[Bounding the Real Gain]
  \label{lem:gain-sc}
  \begin{align}\label{eq:mid-rosc}
    \EE\bigg[\sum_{t < \tau} \realgain_t\bigg] \le 1 + \E{}{ \log
    \left(\frac{\Phi^0}{\Phi^{\tau-1}}\right)} \leq O(\log n).
  \end{align}
\end{lemma}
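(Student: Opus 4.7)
The plan is to prove a one-step inequality relating $\realgain_t$ to the conditional drop in $\log \Phi^t$, and then telescope via optional stopping, paying a harmless ``$+1$'' to stop at $\tau-1$ rather than $\tau$. Concretely, I would first show that for every $t$ with $t+1 \le \tau-1$,
\[
  \realgain_t \;\le\; \log \Phi^t \;-\; \EE[\log \Phi^{t+1} \mid \cH^t].
\]
Let $R_t := \sum_{f \in \Suffix^t} \kappa_f^t \cdot \bs{1}[f \text{ is covered by the sampling step at time } t]$ denote the (random) decrease in $\Phi$ due to the sampling step alone. Unpacking the definition of $\realgain_t$---averaging over the uniformly chosen $e^t \sim \Suffix^t$ and over the OMD sampling coins $\coins^t$---yields $\EE[R_t \mid \cH^t] = \Phi^t \cdot \realgain_t$, while the backup step can only cover more elements, so $\Phi^{t+1} \le \Phi^t - R_t$ deterministically. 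Applying $\log(1-z) \le -z$ pointwise to $z = R_t/\Phi^t \in [0,1]$ and taking conditional expectation gives the one-step inequality.

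Next I would telescope. Let $\sigma := \tau - 1$, a stopping time bounded by $n$, and define
\[
  M_t := \log \Phi^{t \wedge \sigma} \;+\; \sum_{s < t \wedge \sigma} \realgain_s.
\]
On $\{t < \sigma\}$ the one-step inequality yields $\EE[M_{t+1} - M_t \mid \cH^t] \le 0$; on $\{t \ge \sigma\}$ the process is constant. Hence $(M_t)$ is a supermartingale, and optional stopping (bounded $\sigma$) gives $\EE[\log \Phi^{\tau-1}] + \EE\bigl[\sum_{s < \tau - 1} \realgain_s\bigr] \le \log \Phi^0$, i.e.\
\[
  \EE\Bigl[\sum_{s < \tau - 1} \realgain_s\Bigr] \;\le\; \EE\!\left[\log\!\left(\Phi^0 / \Phi^{\tau-1}\right)\right].
\]

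The remaining term $\realgain_{\tau-1}$ is at most $1$ directly from its definition (it is an expected fraction of potential covered). Splitting it off the lemma's sum yields the first stated inequality. For the $O(\log n)$ bound, note $\Phi^0 = \sum_e \kappa_e \le n \cdot c(\OPT) \le n \cdot \optest$ (since each $\kappa_e \le c(\OPT)$, as $\OPT$ must cover $e$), while $\Phi^{\tau-1} > \optest$ by definition of $\tau$; so $\log(\Phi^0/\Phi^{\tau-1}) \le \log n$.

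The only subtle point---and the reason the ``$+1$'' appears---is the choice to telescope down to $\Phi^{\tau-1}$ rather than $\Phi^\tau$. The latter can easily collapse (e.g.\ if the sampling step at time $\tau-1$ finishes the cover), making $\log(\Phi^0/\Phi^\tau)$ unbounded and the supermartingale argument degenerate; whereas $\Phi^{\tau-1} > \optest$ is safely lower-bounded. Stopping one step early sacrifices the last $\realgain_{\tau-1}$, which we absorb by the trivial bound $\realgain_{\tau-1} \le 1$.
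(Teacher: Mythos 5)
Your ingredients are the right ones (the realized sampling drop $R_t$, the bound $\log(1-z)\le -z$, splitting off the last step, and the bounds $\Phi^0\le n\,c(\OPT)$, $\Phi^{\tau-1}>\optest$), but the stopping argument that does the telescoping has a genuine gap. First, $\sigma=\tau-1$ is not a stopping time for the filtration $(\cH^t)_t$ in which your one-step inequality lives: $\{\sigma\le t\}=\{\tau\le t+1\}$ depends on $\Phi^{t+1}$, i.e.\ on the outcome of step $t$, which is not in $\cH^t$. More seriously, the stopped process $M_t$ is simply not a supermartingale, and this does not follow from the facts you use ($\EE[R_t\mid\cH^t]=\Phi^t\,\realgain_t$ and $\Phi^{t+1}\le\Phi^t-R_t$). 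Indeed, on $\{t<\sigma\}$ you would need $\EE[(\log\Phi^{t+1}-\log\Phi^t+\realgain_t)\,\ind{\Phi^{t+1}>\optest}\mid\cH^t]\le 0$, but restricting to $\{\Phi^{t+1}>\optest\}$ discards exactly the outcomes (large drops) that justify a large $\realgain_t$. Concretely, suppose that conditioned on $\cH^t$ (with $\Phi^t\gg\optest$) the sampling step covers everything with probability $\nf12$ and nothing otherwise (e.g.\ one set containing all remaining elements, sampled with probability about $\nf12$): then $\realgain_t\approx\nf12$, while $M_{t+1}-M_t\approx\realgain_t$ on the no-drop branch and $=0$ on the drop branch (where $\sigma=t$), so $\EE[M_{t+1}-M_t\mid\cH^t]\approx\nf14>0$. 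Iterating this scenario even shows that the intermediate bound you derive, $\EE[\sum_{s<\tau-1}\realgain_s]\le\EE[\log(\Phi^0/\Phi^{\tau-1})]$, can fail outright: the expected number of steps before $\tau-1$ is constant and each contributes $\approx\nf12$, yet $\Phi$ barely moves before the final jump, so the right-hand side is $\approx 0$. (To be fair, the paper's own proof is terse at exactly this point---it telescopes unconditional expectations and then ``sums up to the stopping time''---but the supermartingale you wrote down does not make that step rigorous.)

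The lemma is nevertheless true, and your pieces can be reassembled into a correct proof; the key is to pass from conditional gains to \emph{realized} drops over the full range $t<\tau$ \emph{before} doing anything at $\tau-1$. Since $\ind{t<\tau}$ and $\Phi^t$ are $\cH^t$-measurable and $\EE[R_t\mid\cH^t]=\Phi^t\,\realgain_t$, the tower rule gives $\EE[\sum_{t<\tau}\realgain_t]=\EE[\sum_{t<\tau}R_t/\Phi^t]$. Now argue pathwise: for $t\le\tau-2$ we have $\Phi^{t+1}\le\Phi^t-R_t$ and $\Phi^{t+1}\ge\Phi^{\tau-1}>\optest>0$, so $R_t/\Phi^t\le-\log(1-R_t/\Phi^t)\le\log(\Phi^t/\Phi^{t+1})$, and these terms telescope to $\log(\Phi^0/\Phi^{\tau-1})$; the final term is bounded trivially by $R_{\tau-1}/\Phi^{\tau-1}\le1$ (this realized last drop, rather than $\realgain_{\tau-1}\le1$, is where the ``$+1$'' comes from). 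Taking expectations yields $\EE[\sum_{t<\tau}\realgain_t]\le 1+\EE[\log(\Phi^0/\Phi^{\tau-1})]\le O(\log n)$, exactly the statement of the lemma.
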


\begin{proof}
  Consider time $t$, and condition on the history $\cH^t$ of sets
  selected thus far. The expected change in potential is
  \begin{align}
    \E{\coins^t, \; e^t\sim\Suffix^t}{\Phi^t - \Phi^{t+1} }
    &\geq \sum_{e \in \Suffix^t} \frac{1}{|\Suffix^t|}
      \sum_{f \in \Suffix^t} \kappa^t_f
      \cdot \Pr[f \text{ covered by sampling step $t$}\mid e^t =
      e] \notag \\
    &= \Phi^t \cdot \realgain_t, \notag
  \end{align}
  by the definition of $\realgain_t$. (We have an inequality since the
  potential may also drop due to the set chosen in Line \ref{step:backup-sc}.)
  Since the value of $\Phi^t$ is completely fixed by the choices in
  $\cH^t$, we get that
  \[ 
    \E{\coins^t, \; e^t\sim\Suffix^t}{\Phi^{t+1} } = \Phi^t
    \cdot \left(1 - \realgain_t\right) \leq \Phi^t \cdot
    \exp\left(-\realgain_t\right). 
  \]
  Taking logarithms and using Jensen's inequality:
  \[ 
    \E{\coins^t, \; e^t\sim\Suffix^t}{\log (\Phi^{t+1})} \leq
    \log \Big(\E{\coins^t, \; e\sim\Suffix^t}{\Phi^{t+1}}\Big) \leq \log (\Phi^t)  - \realgain_t.
  \]
  Since this is true for each conditioning $\cH^t$, we get that
  $\E{}{\log (\Phi^{t+1})} \leq \E{}{\log (\Phi^{t})} - \E{}{\realgain_t(p^t)}$.
  Finally, summing over all times up until the stopping time $\tau -2$,
  we obtain
    \begin{align*}
    \EE\bigg[\sum_{t < \tau-1} \realgain_t(p^t)\bigg] \le \E{}{ \log \left(\frac{\Phi^0}{\Phi^{\tau-1}}\right)} \le O(\log n),
  \end{align*}
  where in the last inequality we used that
  $\Phi^{\tau-1} > \optest \ge c(\OPT)$ by definition of $\tau$ and
  $\optest$ and
  $\Phi^0 = \sum_{e \in \cU} \kappa_e \le n \cdot
  c(\OPT)$.
  Finally, $\realgain_t \leq 1$ for each time $t$; using this for $t =
  \tau-1$ completes the proof. 
\end{proof}  

\subsubsection{Simplifying the ``Real Gain''}

The ``real gain'' function is difficult for us to work with, so we
define a ``decoupled'' version where we essentially linearize the probability of covering element $f$ during the sampling step as $\frac{\kappa^t_f}{\optest} \cdot \sum_{S: f \in S} p_S$:
\begin{align}
  \gain_{t}(p) := \E{f \sim \Suffix^t}{\gain_{t,f}(p)} \qquad \text{where} \qquad
  \gain_{t,f}(p) := \frac{\kappa^t_f}{\optest} \cdot \min\Big(1,
  \sum_{S: f \in S} p_S\Big). \label{eq:gain-sc}
\end{align}
Note that the gain function $\gain_t$ is an expectation over
$\gain_{t,f}$ functions, and hence convenient to use in a stochastic
gradient descent subroutine; indeed, this is precisely what we used in
Line \ref{step:oco-sc} of the algorithm.

\begin{lemma}[Gain vs.\ Real Gain]
  \label{lem:gain-realgain}
  $\gain_t(p^t) \leq \frac{e}{e-1} \cdot \realgain_t$. 
\end{lemma}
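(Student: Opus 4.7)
The plan is to express the coverage probability inside $\realgain_t$ using independence of the sampling step, lower bound it by a concave envelope, and then check that the weighted sum over $e \in \Suffix^t$ telescopes into the definition of $\gain_t(p^t)$.

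First I would unpack the sampling step of \Cref{alg:roscoco}: conditioned on $e^t = e$ and on the history, each set $S$ is added independently with probability $(\kappa^t_e/\optest)\cdot p^t_S \in [0,1]$ (this lies in $[0,1]$ because $\optest \ge c(\OPT) \ge \kappa_e$ and $p^t \in [0,1]^m$). Therefore
\[
\Pr[f \text{ covered by sampling step $t$} \mid e^t = e] \;=\; 1 - \prod_{S \ni f}\Bigl(1 - \tfrac{\kappa^t_e}{\optest}\, p^t_S\Bigr).
\]
Applying the standard inequality $1 - \prod_i (1-x_i) \ge (1 - 1/e)\min(1, \sum_i x_i)$ for $x_i \in [0,1]$, this probability is at least $(1-1/e)\,\min\bigl(1,\ (\kappa^t_e/\optest)\sum_{S\ni f} p^t_S\bigr)$.

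The next step, which is the technical heart of the lemma, is to pull the scalar $c := \kappa^t_e/\optest \in [0,1]$ out of the minimum via the elementary inequality $\min(1, c x) \ge c\,\min(1, x)$ valid for $c \in [0,1]$ and $x \ge 0$ (case split on whether $x \le 1$ or not). This gives
\[
\Pr[f \text{ covered} \mid e^t = e] \;\ge\; \tfrac{e-1}{e}\cdot \tfrac{\kappa^t_e}{\optest}\cdot \min\Bigl(1, \sum_{S \ni f} p^t_S\Bigr).
\]
I expect this step to be the main obstacle: the $\min$ couples the ``per-arrival scaling'' $\kappa^t_e/\optest$ with the ``learner's weight'' $\sum_{S\ni f} p^t_S$, and only one of them can be decoupled without weakening the bound. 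The inequality above is exactly what lets us separate the roles.

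Finally I would plug this into the definition \eqref{eq:3} of $\realgain_t$ and split the double sum:
\[
\realgain_t \;\ge\; \tfrac{e-1}{e}\cdot \frac{1}{|\Suffix^t|\,\Phi^t\, \optest}\,\Bigl(\sum_{e \in \Suffix^t} \kappa^t_e\Bigr)\Bigl(\sum_{f \in \Suffix^t} \kappa^t_f \min\bigl(1, \sum_{S\ni f} p^t_S\bigr)\Bigr).
\]
Since $\sum_{e \in \Suffix^t} \kappa^t_e = \Phi^t$, the $\Phi^t$ factor cancels, and the remaining expression is exactly $\tfrac{e-1}{e}\cdot \gain_t(p^t)$ by definition \eqref{eq:gain-sc}. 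Rearranging yields $\gain_t(p^t) \le \tfrac{e}{e-1}\,\realgain_t$, completing the proof.
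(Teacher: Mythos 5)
Your proposal is correct and follows essentially the same route as the paper's proof: unpack the independent sampling probabilities, apply the $1-\prod_k(1-a_k) \ge (1-\nicefrac1e)\min(1,\sum_k a_k)$ bound (the paper's \Cref{clm:1-1overe}), pull out the scalar $\kappa^t_e/\optest \in [0,1]$ from the minimum, and cancel $\Phi^t = \sum_{e\in\Suffix^t}\kappa^t_e$ against the normalization to recover $\gain_t(p^t)$. The only difference is presentational: you make explicit the elementary inequality $\min(1,cx)\ge c\min(1,x)$ that the paper invokes implicitly when passing from its equation~(\ref{eq:1}) to~(\ref{eq:2}).
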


\begin{proof}
  Using the sampling probability in Line \ref{step:sample-sc} of the
  algorithm in the definition~(\ref{eq:3}),
  \begin{align*}
    \realgain_t 
    &= \frac{1}{|\Suffix^t| \cdot \Phi^t}
      \sum_{e, f \in \Suffix^t} \kappa^t_f
      \cdot \bigg( 1 - \prod_{S: f \in S} \bigg(1 - \frac{p^t_S
      \kappa^t_e }{\optest}\bigg)\bigg).
  \end{align*}
  Define $a_S = \frac{p_S^t \kappa_e^t}{\optest}$; we can use that 
  $1 - \prod_{S} (1-a_S) \geq (1-\nf1e) \min(1, \sum_{S} a_S)$ (see
  \Cref{clm:1-1overe}) to get
  \begin{align}
    \realgain_t 
    &\geq  (1-\nf1e) \cdot \frac{1}{|\Suffix^t|\cdot \Phi^t}
      \sum_{e,f \in \Suffix^t} \kappa^t_f 
      \cdot \min\Big(1, \sum_{S: f \in S} \frac{p^t_S
      \kappa^t_e }{\optest}\Big) \label{eq:1} \\
    &\geq  (1-\nf1e) \cdot \frac{1}{|\Suffix^t|\cdot \Phi^t}
      \sum_{e,f \in \Suffix^t} \kappa^t_f 
      \cdot \frac{\kappa^t_e }{\optest} \cdot \min\Big(1, \sum_{S: f
      \in S} p^t_S \Big). \label{eq:2}
  \end{align}
  where we used that $\kappa_e^t \leq c(\OPT) \leq \optest$ to
  get~(\ref{eq:2}).  Now using that
  $\Phi^t = \sum_{e \in \Suffix^t} \kappa^t_e$, and the definition of
  $\gain_t$ shows that $\realgain_t \geq (1 - \nf1e) \cdot \gain_t(p^t)$
  and completes the proof.
\end{proof}

\subsection{Sufficient Static Gains}

\Cref{lem:gain-sc,lem:gain-realgain} bounded the sum of our gains over all timesteps by $O(\log n)$. We now show the algorithm's expected cost can be upper bounded not by our gains, but by the gains of the optimal (fractional) solution and the optimal cost.

\begin{lemma}\label{lemma:suffSG}
 If $p^\star$ denotes the optimal
    (fractional) solution of the linear programming
    relaxation for the set cover problem, then
    \begin{align*}
    \EE\bigg[\sum_{t < \tau} c(\ALG^t)\bigg] \le 4c(\OPT) \cdot \EE\bigg[\sum_{t < \tau}  \gain_t(p^\star)\bigg].
    \end{align*}    
\end{lemma}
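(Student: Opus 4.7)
The plan is to bound both sides pointwise (conditional on the history $\cH^t$) for each $t < \tau$ and then sum. The key observation is that, for the LP-optimal fractional solution $p^\star$, feasibility gives $\sum_{S \ni f} p^\star_S \ge 1$ for every $f \in \cU$, and the cost constraint $\langle c, p^\star \rangle \le c(\OPT) \le \optest$ places $p^\star \in \scalesmpx$. Using these facts directly in the definition of $\gain_t$ in~\eqref{eq:gain-sc} makes the $\min(1,\cdot)$ saturate at $1$.

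First, I would compute $\gain_t(p^\star)$ explicitly. For any $f \in \Suffix^t$ we have $\min(1, \sum_{S \ni f} p^\star_S) = 1$, so
\[
\gain_t(p^\star) \;=\; \E{f \sim \Suffix^t}{\frac{\kappa^t_f}{\optest}} \;=\; \frac{\Phi^t}{|\Suffix^t| \cdot \optest},
\]
using $\Phi^t = \sum_{e \in \Suffix^t} \kappa^t_e$. This is a deterministic function of $\cH^t$.

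Next, I would bound $\EE[c(\ALG^t) \mid \cH^t]$ by the same quantity up to constants. As already argued right after the statement of \Cref{thm:rosc-oco}, the element $e^t$ is uniform in $\Suffix^t$, so the \textsc{Backup} step contributes at most $\EE_{e \sim \Suffix^t}[\kappa_e^t] = \Phi^t / |\Suffix^t|$ in conditional expectation, and the \textsc{Sampling} step contributes at most the same amount, since
\[
\E{}{\sum_{S \in \cS} c_S \cdot \tfrac{\kappa^t_{e^t}}{\optest}\, p^t_S \;\Big|\; \cH^t} \;=\; \frac{\EE_{e \sim \Suffix^t}[\kappa^t_e]}{\optest}\cdot \langle c, p^t \rangle \;\le\; \frac{\Phi^t}{|\Suffix^t|},
\]
using $p^t \in \scalesmpx$. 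Combining,
\[
\EE[c(\ALG^t) \mid \cH^t] \;\le\; \frac{2\Phi^t}{|\Suffix^t|} \;=\; 2\optest \cdot \gain_t(p^\star).
\]

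Finally, I would take total expectations, sum over $t < \tau$, and invoke $\optest \le 2 c(\OPT)$ to replace $2\optest$ by $4c(\OPT)$, yielding the claimed bound. There is no real obstacle here; the only thing to be slightly careful about is that $\gain_t(p^\star)$ is a random quantity (it depends on $\Suffix^t$ and the $\kappa^t_f$'s), but the inequality above holds pointwise for every $\cH^t$, so summing and taking expectations is unproblematic, and the stopping time $\tau$ is measurable with respect to $\cH^t$.
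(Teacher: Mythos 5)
Your proof is correct and follows essentially the same route as the paper: feasibility of $p^\star$ saturates the $\min(1,\cdot)$ so that $\gain_t(p^\star)=\EE_{f\sim\Suffix^t}[\kappa^t_f]/\optest$, the per-step conditional cost is bounded by twice $\EE_{e\sim\Suffix^t}[\kappa^t_e]$ (backup plus sampling, using $\ip{c,p^t}\le\optest$), and $\optest\le 2c(\OPT)$ gives the factor $4$. Your extra remark about $\{t<\tau\}$ being determined by $\cH^t$ just makes explicit a step the paper leaves implicit.
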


\begin{proof}
    . By definition of $\gain_{t,f}$ we obtain
    \begin{align*}
      \gain_{t,f}(p^\star)
      &= \frac{\kappa^t_f}{\optest}
        \cdot \min \Big(1, \sum_{S \in \cS : S \ni f} p^\star_S \Big) = \frac{\kappa^t_f}{\optest},
    \end{align*}
    where we used that $p^\star$ is a fractional set cover and hence
    covers $f$ at to an extent of at least $1$. Now substituting
    into~(\ref{eq:gain-sc}),
    \begin{align*}
        \gain_t(p^\star)
         &= \frac{1}{\optest} \cdot \underset{f \sim \Suffix^t}{\EE}[\kappa^t_f]
         \ge \frac{{\EE}[c(\ALG^t)]}{4c(\OPT)},
    \end{align*}
    the inequality using that the algorithm's expected cost is at
    most $2\EE_{f \sim \Suffix^t}[\kappa^t_f]$ and
    $\optest \le 2c(\OPT)$. Rearranging, adding up to time $\tau - 1$, and taking expectations gives the claimed bound.
\end{proof}

\subsubsection{Low Regret using Stochastic OCO Implies Competitiveness}

The final step of the analysis is to use the low-regret property of the online convex optimization procedures. This allows us to relate the algorithm’s cumulative gain to that of the optimal solution $p^\star$, thereby completing the argument that upper-bounds the algorithm’s expected cost. To this end, we first establish the following properties of the gain function.
algorithm.
To proceed, we need the following properties of the gain
function.

\begin{claim}[Gain is Well-Behaved]
  \label{clm:gain-good-sc}
  For each $t$ and each $e \in \Suffix^t$, the function
  $\gain_{t,e}(\cdot)$ is non-negative, concave, and moreover each coordinate
  $(\partial \gain_{t,e})_S$ of its subgradient lies in $[0,c_S/\optest]$.
\end{claim}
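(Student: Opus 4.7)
The plan is to verify each of the three properties in turn, exploiting the fact that $\gain_{t,e}(p)$ is (up to a nonnegative scalar) the composition $\min(1,\cdot)$ with a nonnegative linear form in $p$.

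First, for nonnegativity, I would observe that on the domain $\scalesmpx \subseteq [0,1]^m$ we have $\sum_{S: e \in S} p_S \ge 0$, hence $\min(1, \sum_{S: e \in S} p_S) \in [0,1]$. Since $\kappa^t_e \ge 0$ and $\optest > 0$, multiplying gives $\gain_{t,e}(p) \ge 0$. Second, for concavity, the linear functional $L_e(p) := \sum_{S : e \in S} p_S$ is concave (in fact, affine), the constant function $1$ is concave, and the pointwise minimum of two concave functions is concave; multiplying by the nonnegative scalar $\kappa^t_e/\optest$ preserves concavity.

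For the subgradient bound, I would compute a valid subgradient coordinate-by-coordinate. Since $\min(1, L_e(p))$ is piecewise linear, its Clarke subdifferential at any $p$ is of the form $\alpha \cdot \nabla L_e(p)$ with $\alpha \in [0,1]$ (with $\alpha = 1$ when $L_e(p) < 1$, $\alpha = 0$ when $L_e(p) > 1$, and any convex combination at the kink $L_e(p) = 1$). For coordinate $S$ this gives $\alpha \cdot \bs{1}(e \in S)$, which lies in $[0,1]$ if $e \in S$ and equals $0$ otherwise. Multiplying by the prefactor $\kappa^t_e/\optest$, the subgradient coordinate $(\partial \gain_{t,e})_S$ is $0$ whenever $e \notin S$, and lies in $[0, \kappa^t_e/\optest]$ whenever $e \in S$.

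The only step that requires a moment of thought is showing that when $e \in S$ the upper bound $\kappa^t_e/\optest$ is at most $c_S/\optest$ as required. If $X^t_e = 0$ then $\kappa^t_e = 0$ and there is nothing to check; otherwise $\kappa^t_e = \kappa_e$, and by definition $\kappa_e$ is the cost of the \emph{cheapest} set containing $e$, so $\kappa_e \le c_S$ for every $S$ with $e \in S$. This gives $(\partial \gain_{t,e})_S \in [0, c_S/\optest]$ in all cases, completing the claim. The entire argument is short; I expect no real obstacle beyond being careful at the kink $L_e(p) = 1$ and invoking the definition of $\kappa_e$ as a minimum.
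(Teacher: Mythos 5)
Your proposal is correct and follows essentially the same route as the paper: concavity from $\min(1,\cdot)$ composed with a nonnegative linear form, and the gradient bound from $(\partial \gain_{t,e})_S \le (\kappa^t_e/\optest)\cdot\ind{e \in S}$ together with $\kappa^t_e \le \kappa_e \le c_S$ for any $S \ni e$. Your extra care about the kink and the Clarke subdifferential is a fuller write-up of the same argument, not a different one.
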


\begin{proof}
  Non-negativity follows from the definition, and concavity follows from concavity of the function $x \mapsto \min(1,\sum_i x_i)$; for the second, note
  that
  \begin{align*}
    \frac{\partial}{\partial p_S} \gain_{t,e}(p) \cdot
    \frac{\optest}{c_S} \leq \frac{\kappa^t_e}{c_S} \cdot \ind{e
    \in S}\le 1,
  \end{align*}
  where we use that $\kappa^t_e \leq \kappa_e \leq c_S$ for any set
  $S$ that contains $e$.
\end{proof}

We are finally in a position to complete the proof of \Cref{thm:rosc-oco}.
\begin{proof}[Proof of \Cref{thm:rosc-oco}]
 The subgradient $\partial \gain_{t,e^t}$ fed to the \stococo procedure in Line \ref{step:oco-sc} is an unbiased estimate for the subgradient $\partial \gain_t$ of the gain function. Since \Cref{clm:gain-good-sc} also ensures that the conditions of \Cref{thm:regretConcaveScaled} are satisfied, it guarantees that the procedure computes $p^t$'s with gains competitive to those of the optimal fractional set cover solution $p^\star$, namely (recall we used $\eta = \nf12$)
    \[
      \EE\bigg[\sum_{t < \tau} \gain_t(p^t)\bigg] \geq \frac{1}{2} \cdot \EE\bigg[\sum_{t <
      \tau} \gain_t(p^\star)\bigg] - O(\log m),
    \]
    This connects \Cref{lem:gain-sc,lem:gain-realgain} (which bound the sum of our gains by $O(\log n)$) to \Cref{lemma:suffSG} to give 
    \begin{align*}
        \EE\Big[\sum_{t < \tau}c(\ALG^t)\Big] \le O(\log(mn)) \cdot
        c(\OPT). 
    \end{align*}
    Combining this with the expected cost the algorithm incurs after stopping time $\tau$ we obtain
    \[
        \E{}{c(\ALG)} =  \EE\Big[\sum_{t < \tau}c(\ALG^t)\Big] +
        \EE\Big[\sum_{t \ge \tau} c(\ALG^t)\Big]  \le O(\log(mn)) \cdot
        c(\OPT), 
    \]
    as desired.
\end{proof}

\subsection{The Guess-and-Double Framework} \label{sec:guess}

We have assumed that we know an estimate $\optest$ of the optimal value satisfying $c(\OPT) \le \optest \le 2 c(\OPT)$. If computational complexity is not a concern, this can be accomplished by using the guess-and-double framework, namely starting with the estimate $\optest$ being the cost of the cheapest set, and doubling $\optest$ and restarting the algorithm whenever the (integral) optimum of the instance seen thus far exceeds $\optest$. 

In order to obtain a polynomial-time algorithm, one can replace
$c(\OPT)$ throughout for the  use the cost of the linear programming
relaxation $c(\LP)$ (i.e., $c(\LP) := \min\{\sum_{S \in \cS} c_S x_S :
\sum_{S : e \in S} x_S \ge 1,~\forall e \in \cU\}$) instead of
$c(\OPT)$. Notice that indeed the solution $p^{\star}$ used in the
proof of \Cref{lemma:suffSG} was already the fractional optimal
solution. The only other place where any property of the optimal
solution is used is in upper bounding the (log of) the initial
potential $\Phi^0$ at the end of \Cref{lem:gain-sc}, but it still
holds that $\Phi^0 = \sum_{e \in \cU} \kappa_e \le n \cdot c(\LP)$,
since even the fractional optimum is at least the minimum cost set
containing each element.
Moreover, this bound becomes
$\Phi^0 = \sum_{e \in \cU} \kappa_e \le n \cdot \textsc{IPGap} \cdot
c(\LP)$, where $\textsc{IPGap} = \nicefrac{c(\OPT)}{c(\LP)}$; as long as this is at most polynomial in $n$, this only changes
constant factors in the bound of
$\log (\nicefrac{\Phi^0}{\Phi^{\tau - 1}})$. Since the fractional optimum
for the instance seen thus far can be computed in polytime, one can
then use the above guess-and-double strategy and be able to assume the
estimate $c(\LP) \le \optest \le 2 c(\LP)$.

\medskip

In the following section, we abstract out aspects of the set cover
analysis to get a general recipe for the class of monotone covering
problems.

\section{OCO for Random-Order Monotone Problems}\label{sec:framework}

We now give our framework for monotone covering problems in the
random order setting. Our goal will be to abstract out the generic
steps from the problem-specific ones, which can help focus on the
essential core of these problems.

\subsection{General Problem Formulation}

We study the following class of covering problems, which are online
minimization problems characterized by (i)~a (finite) set of
\textit{resources} $\cS$ (with $|\cS|=m$), and (ii)~a (finite) set of
\textit{elements} $\cU$ (with $|\cU|=n$). For set cover, the resources
are subsets of elements in some given family, which can be used to
cover elements they contain, and for facility location the resources are the possible facilities.

Each element needs to be ``satisfied'' or ``covered'', and for that we
need to buy resources and pay the augmentation cost. More precisely,
each resource $R \in \cS$ has a cost $c_R$. For every element
$e \in \cU$ and (multi)set of resources $\cR$, the augmentation cost
incurred if resources $\cR$ have already been bought is denoted by
$\aug(e, \cR) \geq 0$. For example, for set cover $\aug(e, \cR)$ is
$0$ if $e$ belongs to some set in $\cR$, otherwise it is the cost of
the cheapest set containing $e$. In facility location, $\aug(e, \cR)$
is the minimum between connecting element/demand $e$ to a facility in
$\cR$ or opening and connecting to a new facility outside of $\cR$; we
give more details in \Cref{sec:nmfl}.

We assume the following \emph{monotonicity} and \emph{minimality} properties of the augmentation cost; the first says that the fewer resources we have, the higher the augmentation cost is, and the second says that it is possible to ``buy new resources to reduce the augmentation cost'':  

\begin{enumerate}
    \item (monotonicity) $\aug(e, \cR) \geq \aug(e, \cR')$ if $\cR \sse \cR'$.

    \item (minimality) $\aug(e,\cR) \le \aug(e,\cR') + \sum_{R \in \cR' \setminus \cR} c_R$ if $\cR \sse \cR'$.
\end{enumerate}

The online setting is then: the costs $c$ and $\aug(\cdot,\cdot)$ are
given upfront, and items $e^1,e^2,\ldots,e^n \in \cU$ arrive
one-by-one. At time $t$, we assume that the
  multiset of resources previously bought is denoted by $\cR^t$. When element $e^t$ arrives, the algorithm does the following:
\begin{enumerate}[label=(\roman*)]
\item \label{item:buy} it decides which (if any) additional resources it wants to buy,
  updating the (multi)set of obtained resources to
  $\cR^{t+1} \supseteq \cR^{t}$ and incurring  cost $\sum_{R \in
    \cR^{t+1}\setminus \cR^t} c_R$, and 
\item \label{item:pay} it also incurs the additional augmentation cost
  $\aug(e^t, \cR^{t+1})$. Often we consider the larger quantity 
    $\aug(e^t, \cR^t)$ instead, since the two differ by at most the cost
    incurred in~\ref{item:buy}  due to the minimality property above.
\end{enumerate}
The goal is to minimize the total amount spent on buying the resources
plus the per-step augmentation costs, namely
$\sum_{R \in \cR^n} c_R + \sum_{t \le n} \aug(e^t, \cR^{t+1})$. As always,
$\OPT$ denotes the set of resources
of the offline optimal solution, and $c(\OPT)$ its cost. We note that the ``minimality''
property of the augmentation cost implies that the initial
augmentation cost $\aug(e, \emptyset)$ of an element $e$ is upper
bounded by $\OPT$, as
$\aug(e, \emptyset) \le \sum_{R \in \OPT} c_R + \aug(e, \OPT) \le
c(\OPT)$.

\subsection{The General Framework}

The \textsc{ROSC-OCO} algorithm for random order (weighted) set cover
can be extended to a general framework for more general covering
problems. As before, let $\Suffix^t$ be the elements which have not
been seen in the first $t-1$ timesteps, so that $\Suffix^1 = \cU$.
Let $\cR^t \sse \cS$ be the resources that have been chosen by the
algorithm by the beginning of the $t^{th}$ timestep, and hence
$\cR^1 = \emptyset$. Again we assume we know an estimate $\optest$ such that
$c(\OPT) \le \optest \le 2c(\OPT)$, following the discussion from \Cref{sec:guess}; as in the previous section, let
$\scalesmpx := \{p \in [0,1]^m \mid \ip{c,p} \leq \optest\}$.

In our framework, we define an \emph{augmentation cost estimate}
$\kappa_e^t$, with the property that
\begin{gather}
  \aug(e, \cR^t)/\approxfactor \leq \kappa_e^t \leq \aug(e, \cR^t) \label{eq:kappa-gen}
\end{gather}
for some parameter $\approxfactor \geq 1$. We require that $\kappa_e^t$
should be efficiently computable given $e$ and $\cR^t$, and that we
have access to an augmentation algorithm $\AUG$ that, given any $e$
and $\cR^t$, can output an augmentation set $\cA^t$ with cost at most
$\approxfactor \kappa_e^t$. Our algorithm, given in~\Cref{alg:augoco}, is very similar to the one
for set cover. Akin to set cover, the randomness in the algorithm's decisions at time $t$ is denoted by $\coins^t$.

\begin{algorithm}
  \caption{\augoco}
  \label{alg:augoco}
  Define $\Suffix^1 = \cU$ \\
  \ForEach{time $t = 1,2, \ldots$}{
    Let $e^t$ be the next element, which is a uniformly random element of $\Suffix^t$.\\
      
      Cover $e^t$ by using augmentation algorithm $\AUG$ paying $\le \approxfactor \kappa^t_{e^t}$ 
      \tcp*{\textsc{Backup}} \label{step:backup-gen}

      Buy each resource $S \in \cS$ independently w.p.\
      $(\nf{\kappa^t_{e^t}}{\optest}) \cdot p^t_S$
      \tcp*{\textsc{Sampling}} \label{step:sample-gen} 

      Feed a suitable function $\gain_{t,e^t}$ to $\stococo$ to get $p^{t+1} \in
      \scalesmpx$ 
      \tcp*{\textsc{OCO}} \label{step:oco-gen} 
      
      $\Suffix^{t+1} \gets \Suffix^t \setminus \{e^t\}$.
  }
\end{algorithm}

Line~\ref{step:oco-gen} is not fully specified---the gain function
$\gain_{t,e}$ will be problem-specific. In the next section, we
present the properties we require from this gain function to enable
our analysis. Again, for concreteness, throughout we use in Line~\ref{step:oco-gen} the OCO algorithm from \Cref{thm:regretConcaveScaled} with $\eta = \nf{1}{2}$.

\subsection{The Analysis}
\label{sec:gen-analysis}

Define the potential
function
\begin{gather}
  \Phi^t = \sum_{e \in \Suffix^t} \kappa_e^t,  %
  \label{eq:phi-gen}
\end{gather}
which is, up to the parameter $\alpha$, the total augmentation costs $\aug(e,\cR^t)$ of future items $e \in \Suffix^t$ given the current resources $\cR^t$. The analysis framework is again similar to that for set cover:

\begin{enumerate}
\item The expected cost at any timestep $t$ is at most
  $(\approxfactor + 1)\, \kappa_e^t$, since we spend at most
  $\approxfactor \kappa_e^t$ in Line~\ref{step:backup-gen}, and at most
  $\kappa_e^t$ in expectation in Line~\ref{step:sample-gen}.

\item We define the stopping time $\tau$ to be the first timestep for
  which $\Phi^\tau \le \optest$. Consider the elements $e^{\tau},
  \ldots, e^n$ in the order they appear after time $\tau$: the
  expected cost incurred is at most
  \[ \sum_{t \geq \tau} (\approxfactor + 1) \cdot \kappa_{e^t}^t \leq \sum_{t \geq \tau} (\approxfactor + 1) \cdot \kappa_{e^t}^\tau  = (\approxfactor+1) \cdot \Phi^\tau.\]
  This is at most $2(\approxfactor+1) c(\OPT)$ using the definition of $\tau$. Hence we
  now bound the cost incurred prior to the stopping time.
  
\item Consider timestep $t$, and condition on the history $\cH^t$ of
  everything that happened before timestep $t$. Now define the real
  gain function to be
  \begin{gather}
    \realgain_t := \frac{1}{|\Suffix^t|}\cdot \frac{1}{\Phi^t} \cdot \sum_{e, f \in \Suffix^t} \underset{\coins^t}{\EE}[ \kappa^t_f - \kappa^{t+1}_f \mid e^t = e], \label{eq:4-gen}
  \end{gather}
  where $\kappa^{t+1}_f$ depends on the sampled set of resources chosen in Line~\ref{step:sample-gen} of the algorithm.

\item The expected change in potential at timestep is captured by the real gain:
  \begin{align}
    \E{\coins^t, \; e\sim\Suffix^t}{\Phi^t - \Phi^{t+1} }
    &\geq \sum_{e \in \Suffix^t} \frac{1}{|\Suffix^t|}
      \sum_{f \in \Suffix^t} \underset{\coins^t}{\EE}[ \kappa^t_f - \kappa^{t+1}_f \mid e^t = e] \notag \\
    &= \Phi^t \cdot \realgain_t, \notag
  \end{align}
  by the definition of $\realgain_t$. Then, the same analysis as in
  \Cref{sec:setcover-analysis} shows that 
  \begin{align}
    \EE\bigg[\sum_{t < \tau} \realgain_t\bigg] \le 1 + \E{}{ \log
    \left(\frac{\Phi^0}{\Phi^{\tau-1}}\right)} \le O(\log n), \label{eq:le-tau}
  \end{align}
  where again the last inequality uses that
  $\Phi^{\tau-1} > \optest \ge c(\OPT)$ by definition of $\tau$ and
  $\optest$ and
  $\Phi^0 = \sum_{e \in \cU} \kappa^0_e \le  \sum_{e \in \cU} \aug(e,\emptyset) \le n \cdot c(\OPT)$.
\end{enumerate}

Thus far, the analysis has been completely generic, and applies to any
monotone augmentation cost. We finally need some problem-dependent
properties to upper bound the cost of the algorithm by this sum of real gains $\sum_t \realgain_t$, connecting to the $\gain_{t,e^t}$ employed in the algorithm. The following lemma captures the properties we need to
turn a low-regret guarantee for the $\stococo$ subroutine into one
showing bounded competitive ratio.

\begin{lemma}\label{lem:gain-oco-application}
  Assume that for each $t$, there exists a
  gain function $\gain_t(p): \scalesmpx \rightarrow \R_+$ such that:
  \begin{enumerate}[label=(A\arabic*)]
  \item \label{item:ass1} \emph{(Lower Bound on Real Gain)} There exists a constant
    $\gamma \in (0,1]$ such that for all $t \in [n]$, we have
    \[ \gamma\, \gain_t(p^t) \leq \realgain_t; \]
  \item \label{item:ass2} \emph{(Unbiasedness)} the gain function can
    be written as
    \[ \gain_t(p) = \underset{e \sim \Suffix^t}{\EE} [\gain_{t,e}(p)], \]  and hence
    the subgradient $G_t = \partial \gain_{t,e^t}(p^t)$ gives an
    unbiased estimate of $\partial \gain_{t}(p^t)$;
  \item \label{item:ass3} \emph{(Bounded Gradients)} These unbiased estimates
    $G_t$ satisfy $(G_t)_i \in [0, c_i/\optest]$, and 
  \item \label{item:ass4} \emph{(Sufficient Static Gains)} There exists a vector
    $p^\star \in \scalesmpx$ and some $\beta, \delta > 0$ such that
    \[
      \EE\bigg[\sum_{t < \tau} c(\ALG_t)\bigg] \leq c(\OPT) \cdot
      \bigg( \beta \; \EE\bigg[\sum_{t < \tau} \gain_t(p^\star) \bigg]
      + \delta \bigg).
    \]
  \end{enumerate}
  Then, it holds that the cost $c(\ALG)$ of \Cref{alg:roscoco} is bounded by
  \[
    \E{}{c(\ALG)} \leq \Big( O(\approxfactor) + \frac{O(\beta\, \log
      mn)}{\gamma} + \delta \Big) \cdot c(\OPT).
  \]
\end{lemma}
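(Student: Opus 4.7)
The plan is to follow the template of the set cover proof in \Cref{sec:setcover-analysis}, only now plugging in the four abstract assumptions in place of their concrete counterparts. Specifically, I would assemble the estimates in the order \ref{item:ass4} $\to$ OCO regret via \ref{item:ass2}--\ref{item:ass3} $\to$ \ref{item:ass1} $\to$ generic real-gain bound \eqref{eq:le-tau}. To start, split
$$
  c(\ALG) \;=\; \sum_{t<\tau} c(\ALG_t) \;+\; \sum_{t\ge\tau} c(\ALG_t).
$$
The tail is already handled by step~2 of the generic analysis in \Cref{sec:gen-analysis}: by the definition of $\tau$ and the monotonicity of the $\kappa^t_\cdot$ estimates, it contributes at most $2(\approxfactor+1)\, c(\OPT) = O(\approxfactor)\, c(\OPT)$. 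So all of the new work is in the pre-$\tau$ term.

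For that pre-$\tau$ term, I would invoke \ref{item:ass4} to reduce it to $c(\OPT)\cdot\bigl(\beta\,\EE[\sum_{t<\tau}\gain_t(p^\star)] + \delta\bigr)$; the entire remaining task is then to bound $\EE[\sum_{t<\tau} \gain_t(p^\star)]$ by $O(\log(mn))/\gamma$. For this I would proceed in three steps. First, apply \Cref{thm:regretConcaveScaled} with $\eta=\nicefrac12$, using \ref{item:ass2} to certify that $G_t = \partial \gain_{t,e^t}(p^t)$ is an unbiased stochastic subgradient of $\gain_t$ at $p^t$, and using \ref{item:ass3} with $M = \optest$ to verify the $\ell_\infty$-normalization $(G_t)_i\cdot(\optest/c_i) \in [0,1]$. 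This yields
$$
  \EE\bigg[\sum_{t \le \tau} \gain_t(p^\star)\bigg] \;\le\; 2\,\EE\bigg[\sum_{t \le \tau} \gain_t(p^t)\bigg] + O(\log m).
$$
Second, apply \ref{item:ass1} pointwise to bound $\gain_t(p^t) \le \realgain_t/\gamma$. Third, invoke the generic bound \eqref{eq:le-tau}, which gives $\EE[\sum_{t<\tau} \realgain_t] \le O(\log n)$; the boundary term $\realgain_\tau$ is absorbed by the trivial estimate $\realgain_\tau \le 1$. Chaining these three steps yields $\EE[\sum_{t<\tau}\gain_t(p^\star)] \le O(\log(mn))/\gamma$ (using $\gamma \le 1$), which together with the $\beta,\delta$ from \ref{item:ass4} and the $O(\approxfactor)\,c(\OPT)$ tail delivers the claimed bound.

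I do not anticipate a major conceptual obstacle, since the argument is a modular reassembly of the set cover proof with each concrete estimate now wrapped in its named assumption. The one place requiring a little care is the application of \Cref{thm:regretConcaveScaled} to the stopping time $\tau$: one needs $\tau$ to be a stopping time of the natural filtration generated by $(e^t, \coins^t)_t$, which is automatic because $\{\tau \le t\}$ is decided by $\Phi^t \le \optest$ and $\Phi^t$ is measurable with respect to the history through time $t$. The minor bookkeeping of reconciling the OCO theorem's $\sum_{t \le \tau}$ with \eqref{eq:le-tau}'s $\sum_{t < \tau}$ is, as in \Cref{lem:gain-sc}, absorbed by the uniform bound $\realgain_t \le 1$.
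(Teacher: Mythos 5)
Your proposal is correct and follows essentially the same chain of inequalities as the paper's proof: applying \Cref{thm:regretConcaveScaled} with $\eta=\nicefrac12$ via \ref{item:ass2}--\ref{item:ass3}, converting to real gains via \ref{item:ass1}, invoking the generic potential bound \eqref{eq:le-tau}, and plugging into \ref{item:ass4}, with the post-$\tau$ tail handled by the $O(\approxfactor)\,c(\OPT)$ estimate from \Cref{sec:gen-analysis}. The only difference is narrative order (you lead with \ref{item:ass4} and unwind backward, whereas the paper proceeds forward from the OCO regret bound), and your extra care about the $\sum_{t\le\tau}$ versus $\sum_{t<\tau}$ boundary term and the stopping-time measurability is a modest tightening of bookkeeping the paper handles implicitly.
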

\begin{proof}
  Since $G_t = \partial \gain_{t,e^t}(p)$ is an unbiased estimate of
  $\partial\gain_t(p^t)$ by Assumption~\ref{item:ass2}, and
  $G_{t,i} \in [0, c_i/\optest]$ by Assumption~\ref{item:ass3}, we can
  use \Cref{thm:regretConcaveScaled} with $\eta = \nf{1}{2}$ to infer that for any vector
  $p^\star \in \scalesmpx$,
  \[
    \EE\bigg[ \sum_{t < \tau} \gain_t(p^t) \bigg] \geq (1-\eta) \cdot
    \EE\bigg[\sum_{t < \tau} \gain_t(p^\star) \bigg]- \frac{O(\log
      m)}{\eta} = \frac 12 \cdot \EE\bigg[ \sum_{t < \tau} \gain_t(p^\star)
    \bigg] - O(\log m).
  \]
  Moreover, combining Assumption~\ref{item:ass1} with (\ref{eq:le-tau}):
  \[
    \EE\bigg[\sum_{t < \tau} \gain_t(p^t)\bigg] \le \frac1\gamma \cdot
    \EE\bigg[\sum_{t < \tau} \realgain_t\bigg] \le \nf1\gamma \cdot O(\log n).
  \]
  Combining the two inequalities above with Assumption~\ref{item:ass4} gives
  \[
    \sum_{t < \tau} \E{}{c(\ALG_t)} \le c(\OPT) \cdot \bigg( \frac{O(\beta \, \log
      mn)}{\gamma} + \delta \bigg). 
  \]
  Finally, adding in the $O(\approxfactor\cdot c(\OPT))$ cost after the
  stopping time completes the proof.
\end{proof}

To summarize, using this framework to get $O(\log mn)$-competitive
algorithms requires the following:
\begin{enumerate}[label=(\roman*)]
\item Define a suitable proxy $\kappa_e^t$ which is an under-estimate
  for the augmentation cost $\aug(e, \cR^t)$, and also show an
  augmentation algorithm $\AUG$ that covers $e$ with cost at most
  $\approxfactor\, \kappa_e^t$. 
\item Give a ``decoupled'' function
  $\gain_t = \mathbb{E}_{e \in \Suffix^t} [\gain_{t,e}]$ which satisfies the
  four assumptions in~\Cref{lem:gain-oco-application}, for values
  $\gamma, \beta \in \Theta(1)$. Recall this gain function is the one
  we will use in Line~\ref{step:oco-gen} of \Cref{alg:augoco}.

\end{enumerate}

\section{Applications}\label{sec:applications}

In this section, we apply the framework presented in
\Cref{sec:framework} to three problems considered in \citep{GuptaKL21,
  GuptaKL24}. For each of these applications, we consistently follow a
two-part analysis: first, we employ a version of $\augoco$
(\Cref{alg:augoco}) that is specifically instantiated for the
considered application. Specifically, for every application, we
specify what the augmentation cost is, since this is the feature differentiating the algorithms. Second, we verify the conditions outlined in \Cref{lem:gain-oco-application} to prove a tight competitive ratio guarantee for our algorithm.

\subsection{Unweighted Set Multicover}\label{sec:smc}

In the random-order (unweighted) set multicover problem, the resources
$\cS$ are again, but now each element $e \in \cU$ arrives with an
integer demand $b_e \in \Z_+$. We seek a solution to the following
IP:
\begin{align}
    &\min \;\ip{\bs{1}, x} \notag\\
    &~\text{s.t.}~ Ax \geq b \tag{\textsc{SMC}} \label{eq:rosmc}\\
    &\qquad x \in \{0,1\}^m, \notag
\end{align}
where $A \in \{0,1\}^{n \times m}$ is the set-element incidence
matrix, and $b \in \Z^n_+$ is the vector of demands. Recall that $\cR^t$ denotes the resources/sets picked by the algorithm before time $t$. At any time $t$,
let $b^t_e := \max(0, b_e - \sum_{S \in \cR^t} A_{eS} x^t_S)$ denote the ``uncoverage''
for element $e \in \cU$ at time $t$; we ensure that all elements
$e \in \cU \setminus \Suffix^t$ have $b^t_e = 0$---i.e., elements are
covered once they arrive.

\subsubsection{Augmentation Costs and the Potential}

Since we have unit set costs, we can pick $b^t_e$
as-yet-unpicked sets to cover $e$, and hence we have
$\aug(e, \cR^t) := b^t_e$. Define the augmentation cost estimate as the cost itself, namely $\kappa^t_{e} = b^t_{e}$, so that
$\approxfactor = 1$ in (\ref{eq:kappa-gen}). Conditioning on the
history $\cH^t$ until time $t$, the real gain at time $t$ is
\begin{align}
  \realgain_t &= \frac{1}{\Phi^t \cdot |\Suffix^t|} \sum_{e,f
                \in \Suffix^t} \underset{\omega^t}{\EE}\big[b^t_f - b^{t+1}_f ~\big|~ e^t = e \big]. \label{eq:rg-smc}
\end{align}
Define its linearization/decoupling as:
\begin{gather}
  \gain_t(p) = \underset{f \sim \Suffix^t}{\EE}[\gain_{t,f}(p)] \qquad
  \text{where} \qquad \gain_{t,f}(p) = \frac{1}{\optest} \min\Big(b_f^t, \sum_{S \not\in \cR^t} A_{f S} \; p_S\Big).
\end{gather}
Using this definition of $\kappa_e^t$ and the gain function, we can
now use the framework $\augoco$; since we have unit costs, we are
working in the polytope
$\{ p \in [0,1]^n \mid \ip{\bs{1}, p} \leq \optest \}$. The
following theorem proves the competitiveness of the resulting
algorithm.

\subsubsection{Competitiveness}

\begin{theorem}\label{thm:rosmc-oco}
  There exists an $O(\log (mn))$-competitive algorithm for
  random-order (unweighted) set multicover problem.
\end{theorem}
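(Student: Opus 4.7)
The plan is to instantiate \Cref{alg:augoco} for multicover with the natural choice $\kappa_e^t = b_e^t$ (so $\approxfactor = 1$, since $b_{e^t}^t$ unit-cost fresh sets trivially furnish the \AUG step) and with the decoupled gain
\[
  \gain_{t,f}(p) = \frac{1}{\optest}\,\min\Big(b_f^t,\; \sum_{S \not\in \cR^t} A_{fS}\, p_S\Big)
\]
as the function fed to \stococo{}, and then to verify assumptions \ref{item:ass1}--\ref{item:ass4} of \Cref{lem:gain-oco-application} with $\gamma,\beta = \Theta(1)$ and $\delta = 0$. Invoking that lemma immediately gives the claimed $O(\log mn)$-competitiveness.

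Three of the four assumptions are essentially bookkeeping. Assumption~\ref{item:ass2} is built into the definition $\gain_t(p) = \EE_{f \sim \Suffix^t}[\gain_{t,f}(p)]$; assumption~\ref{item:ass3} holds because each $\gain_{t,f}$ is a minimum of linear functions of $p$ (hence concave) with $\partial \gain_{t,f}/\partial p_S \in [0,\,A_{fS}/\optest] \subseteq [0,\,1/\optest] = [0,\,c_S/\optest]$, using that costs are unit. For assumption~\ref{item:ass4} I would take $p^\star$ to be the optimal fractional solution of the LP relaxation of (\ref{eq:rosmc}). Since $p_S^\star \le 1$ gives $\sum_{S \in \cR^t} A_{fS}\, p_S^\star \le \sum_{S \in \cR^t} A_{fS} = b_f - b_f^t$, combining with $\sum_S A_{fS}\, p_S^\star \ge b_f$ yields $\sum_{S \not\in \cR^t} A_{fS}\, p_S^\star \ge b_f^t$; hence $\gain_{t,f}(p^\star) = b_f^t/\optest$ and $\gain_t(p^\star) = \Phi^t/(|\Suffix^t|\,\optest)$. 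Since the per-step expected cost is $(\approxfactor+1)\,\EE_{e \sim \Suffix^t}[b_e^t] = 2\Phi^t/|\Suffix^t|$ and $\optest \le 2c(\OPT)$, we get $\EE[c(\ALG_t)] \le 4 c(\OPT)\,\gain_t(p^\star)$, so $\beta = 4$ and $\delta = 0$.

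The main obstacle is assumption~\ref{item:ass1}, the multicover analog of \Cref{lem:gain-realgain}. Conditional on $\cH^t$ and $e^t = e$, each set $S \not\in \cR^t$ is sampled independently with probability $a_S := (b_e^t/\optest)\, p_S^t \in [0,1]$, so $b_f^t - b_f^{t+1} = \min(b_f^t,\, Z_f)$, where $Z_f = \sum_{S \not\in \cR^t,\, f \in S}\bm{1}[S \text{ sampled}]$ is a sum of independent Bernoullis with mean $\mu_f(e) = (b_e^t/\optest)\sum_{S \not\in \cR^t,\, f \in S} p_S^t$. The crux is a ``capped'' version of \Cref{clm:1-1overe}: for any sum $Z$ of independent $\{0,1\}$-valued random variables with $\EE[Z] = \mu$,
\[
  \EE[\min(k,\, Z)] \;\ge\; (1 - \tfrac{1}{e})\,\min(k,\, \mu).
\]
I would prove this by a convex-order / Poisson-limit argument: the infimum of $\EE[\min(k, Z)]$ over all Bernoulli configurations with $\sum p_i = \mu$ (with $n$ also free) is attained in the Poisson limit $p_i = \mu/n$, $n \to \infty$, and for $Z \sim \text{Poi}(\mu)$ the bound follows from a short direct calculation on partial sums of Poisson tail probabilities (tight at $k = 1$, $\mu = 1$ and with slack for larger $k$).

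With this inequality in hand, applying it coordinatewise in $f$, using $\optest/b_e^t \ge 1$ to pull the factor $b_e^t/\optest$ outside of the $\min$ (exactly as in the passage from (\ref{eq:1}) to (\ref{eq:2}) in the set cover proof), and then averaging $b_e^t$ over $e \sim \Suffix^t$ to recover $\Phi^t/|\Suffix^t|$, produces $\realgain_t \ge (1 - \tfrac{1}{e})\,\gain_t(p^t)$, i.e., $\gamma = 1 - 1/e$. Plugging $\approxfactor = 1$, $\gamma = 1 - 1/e$, $\beta = 4$, $\delta = 0$ into \Cref{lem:gain-oco-application} yields $\EE[c(\ALG)] = O(\log mn)\cdot c(\OPT)$, completing the proof.
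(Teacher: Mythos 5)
Your proposal is correct and instantiates the framework exactly as the paper does: same augmentation estimate $\kappa_e^t = b_e^t$ with $\approxfactor=1$, same decoupled gain, and the same verifications of assumptions \ref{item:ass2}--\ref{item:ass4} of \Cref{lem:gain-oco-application} (including the observation that $p^\star_S \le 1$ forces $\sum_{S\notin\cR^t}A_{fS}p^\star_S \ge b_f^t$, giving $\beta=4$, $\delta=0$). The one place you diverge is assumption \ref{item:ass1}: the paper simply invokes \Cref{fct:crs} (Fact~4.4 of \citep{GuptaKL21}), which handles weighted Bernoulli sums $\sum_j \lambda_j\,\mathrm{Ber}(\pi_j)$ with arbitrary caps $C \ge \nf{1}{(e-1)}$ at the price of the constant $\nf{1}{168}$, yielding $\gamma = \nf1{168}$; you instead propose a ``capped'' strengthening of \Cref{clm:1-1overe}, namely $\EE[\min(k,Z)] \ge (1-\nf1e)\min(k,\EE Z)$ for a sum $Z$ of independent Bernoullis and integer $k\ge 1$, giving the cleaner $\gamma = 1-\nf1e$. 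That inequality is in fact true, and your Poisson-limit route works: merging two Bernoullis into one with the summed mean increases $\EE[f(Z)]$ for concave $f$ (so the infimum is the Poisson limit), and for $Z\sim\mathrm{Poi}(\mu)$ the function $\mu \mapsto \EE[\min(k,Z)] - (1-\nf1e)\min(k,\mu)$ has derivative $\Pr[Z\le k-1]-(1-\nf1e)$ on $[0,k]$, hence is concave there, vanishes at $\mu=0$, and at $\mu=k$ equals $k(\nf1e - \Pr[\mathrm{Poi}(k)=k]) \ge 0$; for $\mu\ge k$ it is monotone. You should flesh out this lemma (it is only sketched), and note that it exploits two features special to unweighted multicover—Boolean $A_{fS}$ and integer caps $b_f^t\ge 1$—so it does not replace \Cref{fct:crs} in the CIP application, where the fractional weights $A_{fS}\in[0,1]$ genuinely require the paper's more general fact. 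In exchange, your route is self-contained and improves the constant in $\gamma$; either way the final bound is $O(\log mn)\cdot c(\OPT)$ via \Cref{lem:gain-oco-application}.
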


\begin{proof}
  Fix time $t$, and condition on the history $\cH^t$ until time $t$.
  By definition, the function $\gain_t$ satisfies the unbiasedness
  property \ref{item:ass2}. We now check 
  the other properties in \Cref{lem:gain-oco-application}.

  (\emph{Lower Bound on Real Gain}): We simplify the inner expression
  in the summation in (\ref{eq:rg-smc}), further conditioning on $e^t$ (so only the randomness $\omega^t$ of the algorithm is free):
  \begin{align*}
    \E{\omega^t}{b^t_f - b^{t+1}_f ~\Big|~ e^t = e}
    &= \underset{\omega^t}{\EE}\Big[\min\Big(b_f^t, \sum_{S \not\in \cR^t} A_{f S} \cdot \ind{S \text{ is sampled at time } t} \Big)~\Big|~ e^t = e\Big]\\
    &\ge \nf{1}{168} \cdot \min\Big(b_f^t, \sum_{S \not\in \cR^t} A_{f S} \cdot \Pr[S \text{ is sampled at time } t \mid e^t = e]  \Big)\\
    &\geq \nf{1}{168} \cdot  \frac{b^t_e}{\optest} \cdot
      \min\Big(b_f^t, \sum_{S \not\in \cR^t} A_{f S} \cdot p_S^t\Big)
      = \nf{1}{168} \cdot b^t_e \cdot \gain_{t,f}(p^t),
  \end{align*}
  where the first inequality uses \Cref{fct:crs} (and the fact that
  $b^t_f \geq 1$), and the second
  inequality substitutes the algorithm's sampling probability
  $p_S^t b_e^t/\optest$ and uses that $b_e^t \leq \optest$. Substituting into~(\ref{eq:rg-smc}), we get
  \begin{align*}
    \realgain_t \geq \nf{1}{168} \cdot \gain_t(p^t).
    \end{align*}
    This proves property~\ref{item:ass1} with $\gamma = \nf1{168}$.

    (\emph{Bounded Gradients}): We have that $\nf{\partial}{\partial p_S} \; \gain_{t,f}(p) = 0$ if $b^t_f$ achieves the minimum in the $\gain_{t,f}(p)$ expression, or if the set $S$ has already been chosen before. Otherwise, 
    \begin{align*}
      \frac{\partial}{\partial p_S} \gain_{t,f}(p)
      = \frac{1}{\optest}
      A_{fS} \leq \frac{1}{\optest} = \frac{c_S}{\optest},
    \end{align*}
    where we used the unweighted setting to infer $c_S = 1$,  which shows property~\ref{item:ass3}.

    (\emph{Sufficient Static Gains}): To show this property, set
    $p^\star$ be the optimal fractional solution to the linear
    programming relaxation of~(\ref{eq:rosmc}). Then
    \begin{align}
      \gain_{t,f}(p^\star)
      &= \frac{1}{\optest} \min\Big(b_f^t,
        \sum_{S \not\in \cR^t} A_{f S} \; p^\star_S\Big) =
        \frac{b_f^t}{\optest} = \frac{\kappa_f^t}{\optest}, \label{eq:4}
    \end{align}
    where we used the feasible of $p^\star$ to infer that it must
    (fractionally) pick at least $b_f^t$ sets from $\cS \setminus \cR^t$ containing $f$. Using
    (\ref{eq:4}), that, as before, the algorithm's expected cost in Lines \ref{step:backup-gen} and \ref{step:sample-gen} is at most $\mathbb{E}_{f \in \cU^t}[\kappa_f^t] + \mathbb{E}_{f \in \cU^t}[\kappa_f^t]$ and $\optest \leq 2c(\OPT)$ in the definition of
    $\gain_t$, we get
    \[ \gain_t(p^\star) = \frac{1}{\optest} \cdot \underset{f \sim \Suffix^t}{\EE}[\kappa_f^t]
    \geq \frac{\EE[c(\ALG_t)]}{4c(\OPT)}, \] thereby satisfying
    the final property~\ref{item:ass4}. The competitive ratio of the algorithm from \Cref{thm:rosmc-oco} now follows from \Cref{lem:gain-oco-application}.    
\end{proof}

\subsection{Covering Integer Programs}\label{sec:cip}

Our second application is to random-order covering integer programs
(CIPs): each element $e$ requires a coverage of $1$, and each
resource/set provides $A_{eS} \in [0,1]$ amount of coverage for
it. Our goal is to find the cheapest set of resources that cover all
elements. In the random order model, the instance is chosen by the
adversary, and then the elements arrive in random order. Formally, we
want to solve the following IP:
\begin{align}
    &\min \;\ip{c, x} \notag\\
    &~\text{s.t.}~ Ax \geq 1 \tag{\textsc{CIP}} \label{eq:ROCIP}\\
    &\qquad x \in \Z^m_+, \notag
\end{align}
where $A \in [0,1]^{n \times m}$; the restriction of $A$ to $[0,1]$ is
without loss of generality, since the RHS is $1$. Note that we do not
have upper bounds on the variables $x$, and hence this does not
generalize set multicover. Indeed, we can pick multiple copies of each
resource, and hence, instead of $\cR^t$, we now use $x^t \in \Z^m_+$ to indicate how many copies of each set the algorithm has picked before time $t$. To reduce notation, let
$\ip{A_e, x} = \sum_{S \in \cS} A_{eS} x_S$ denote the coverage of the
resource $e$ by solution $x$.

\subsubsection{Half-Covering and the Augmentation Function}
\label{sec:half-cover-augm}

It will be convenient to consider a ``half-coverage'' variant of the
problem: we will consider the element $e$ covered by a solution $x \in \Z_+^m$
if the coverage $\ip{A_e, x}$ is at least $1/2$.  Since we do not have
any upper bounds, we can always transform such a ``half-solution'' to
a full solution by taking the solution $2x$ (i.e., by actually picking
two units of the resource when the algorithm picks one), at the
expense of doubling the cost. Henceforth, we focus on this
half-coverage variant. For an element $e$, we define
\[ b_e^t := 
  \begin{cases}
    1 - \ip{A_e, x^t} & \text{if $\ip{A_e, x^t} \leq \nf12$} \\
    0 & \text{otherwise}.
  \end{cases}
\]
Observe that $b_e^t \in \{0\} \cup [\nf12, 1]$. Next, define
\begin{gather}
  \aug(e, x^t) := \min\{ y \in \Z_+^m \mid \ip{A_e, y} \geq b_e^t
  \}. \label{eq:5}
\end{gather}
Let $S(e) := \arg\min_{S \in \cS} \frac{c_S}{A_{eS}}$ be the set that
maximizes the bang-per-buck for element $e$, and
let the optimal density be $\density_e = c_{S(e)}/A_{e S(e)}$. Define the augmentation cost estimate
\[  \kappa_e^t =
  \begin{cases}
    b_e^t \cdot \density_e & \text{if $b_e^t
                                             \geq \nf12$} \\
    0 & \text{otherwise}.
  \end{cases}
\]
The (fractional) solution which chooses $b_e^t/A_{eS(e)}$ copies of
set $S(e)$ is the optimal solution to the fractional relaxation of
(\ref{eq:5}) with cost $\kappa_e^t$; this shows that
$\kappa_e^t \leq \aug(e,x^t)$. Moreover, choosing
$\lceil b_e^t/A_{eS(e)} \rceil$ copies of set $S(e)$ is an integer
solution, and since $b_e^t \geq \nf12$ and $A_{eS} \leq 1$, we get
that $\aug(e,x^t)/4 \leq \kappa_e^t$, giving us $\approxfactor = 4$ in
the inequality~(\ref{eq:kappa-gen}). 

For a generic element $f \in \Suffix^t$, it holds that 
\[
  \kappa^t_f - \kappa^{t+1}_f = \density_f \cdot (b^t_f - b^{t+1}_f) 
\]
The real gain function at time $t$ is
\begin{align*}
  \realgain_t(p^t) &= \frac{1}{\Phi^t \cdot |\Suffix^t|} \sum_{e,f \in \Suffix^t} \underset{\coins^t}{\EE}\big[\kappa^t_f - \kappa^{t+1}_f  ~\big|~ e^t = e \big] = \frac{1}{\Phi^t \cdot |\Suffix^t|} \sum_{e,f \in \Suffix^t} \density_f \cdot \underset{\coins^t}{\EE}\big[b^t_f - b^{t+1}_f  ~\big|~ e^t = e \big].
\end{align*}
We now define the decoupled gain function:
\begin{gather}
  \gain_t(p) = \underset{f \sim \Suffix^t}{\EE}[\,\gain_{t,f}(p)\,] \qquad
  \text{where} \qquad \gain_{t,f}(p) = \frac{\rho_f}{\optest} \cdot
  \min\Big(b_f^t, \ip{A_f,p}\Big). \label{eq:6}
\end{gather}
Using this definition of
$\kappa_e^t$ and $\gain_t(p)$, we instantiate the framework $\augoco$
on the polytope
$\{ p \in [0,1]^n \mid \ip{c, p} \leq \optest \}$.

\subsubsection{Competitiveness} 

\begin{theorem}\label{thm:rocip-oco}
  There exists an $O(\log (mn))$-competitive algorithm for
  random-order CIPs.
\end{theorem}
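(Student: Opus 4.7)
The plan is to invoke \Cref{lem:gain-oco-application} on the CIP instantiation of $\augoco$, using the augmentation estimate $\kappa_f^t=\density_f b_f^t$ (when $b_f^t\geq 1/2$, else $0$), augmentation factor $\approxfactor=4$, and the gain function $\gain_{t,f}(p)=(\density_f/\optest)\min(b_f^t,\langle A_f,p\rangle)$ fed at Line~\ref{step:oco-gen}. Assumptions~\ref{item:ass2} and~\ref{item:ass3} are routine: unbiasedness holds by definition since $\gain_t=\EE_{f\sim\Suffix^t}[\gain_{t,f}]$, and for the gradient bound each nonzero coordinate of $\partial\gain_{t,f}$ equals $\density_f A_{fS}/\optest$, which by the minimality $\density_f=\min_{S'}c_{S'}/A_{f,S'}$ is at most $c_S/\optest$.

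For the real-gain lower bound~\ref{item:ass1}, I exploit the half-cover structure $b_f^t\in\{0\}\cup[1/2,1]$. A case analysis on whether the random sampling pushes $\langle A_f,x^{t+1}\rangle$ past the threshold $1/2$ shows $b_f^t-b_f^{t+1}\geq\tfrac12\min(b_f^t,X_f)$, where $X_f=\sum_S A_{fS}Y_S$ and the $Y_S\sim\mathrm{Ber}((\kappa_{e^t}^t/\optest)p_S^t)$ are independent across $S$. A weighted-Bernoulli correlation-gap argument---the $[0,1]$-valued analogue of the CRS inequality invoked in the set-multicover proof, valid here because $b_f^t\geq 1/2$---then gives $\EE[\min(b_f^t,X_f)\mid e^t=e]\geq\Omega(1)\cdot\min(b_f^t,\EE[X_f])$. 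Using the elementary inequality $\min(b,ax)\geq a\min(b,x)$ for $a=\kappa_e^t/\optest\in[0,1]$ to pull this scalar out of the minimum, then multiplying by $\density_f$ and summing against $\Phi^t=\sum_e\kappa_e^t$ and $|\Suffix^t|$, yields $\realgain_t\geq\gamma\gain_t(p^t)$ for an absolute constant $\gamma>0$.

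The main obstacle is the sufficient-static-gains property~\ref{item:ass4}: since CIPs lack box constraints, the LP optimum $x^\star$ can have entries exceeding $1$, yet the comparator $p^\star$ is required to lie in $\scalesmpx\subseteq[0,1]^m$. The plan is to take $p^\star$ as a $[0,1]$-valued projection of $x^\star$---concretely $p^\star_S=\min(x^\star_S,1)$, which automatically gives $\langle c,p^\star\rangle\leq c(\text{LP})\leq\optest$---and then verify $\EE_f[\density_f\min(b_f^t,\langle A_f,p^\star\rangle)]\geq\Omega(1)\cdot\EE_f[\density_f b_f^t]$. The identity $\density_f A_{fS}\leq c_S$ is the key lever here, since it bounds the coverage lost by capping in cost terms, letting the density-weighted accounting fold the loss into the $O(1)$ multiplicative and additive constants $\beta,\delta$. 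Feeding the verified parameters into \Cref{lem:gain-oco-application} yields $\EE[c(\ALG)]\leq O(\log mn)\cdot c(\OPT)$ for the half-covering algorithm, and the concluding factor-of-$2$ blow-up to convert half- to full-coverage preserves the claimed $O(\log mn)$ competitive ratio.
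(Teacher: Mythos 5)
Your instantiation and the verification of \ref{item:ass1}--\ref{item:ass3} essentially coincide with the paper's proof (the threshold case analysis giving $b_f^t-b_f^{t+1}\ge \min(b_f^t,X_f)$, the appeal to \Cref{fct:crs} with weights $A_{fS}\in[0,1]$, pulling out $\kappa^t_{e^t}/\optest$, and the bound $\density_f A_{fS}\le c_S$ are all as in the paper). The genuine gap is in your treatment of \ref{item:ass4}. The paper takes $p^\star$ to be the (uncapped) LP optimum and uses feasibility, $\ip{A_f,p^\star}\ge 1\ge b_f^t$, to get $\gain_{t,f}(p^\star)=\kappa_f^t/\optest$ exactly. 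You instead truncate, $p^\star_S=\min(x^\star_S,1)$, and claim $\EE_{f\sim\Suffix^t}[\density_f\min(b_f^t,\ip{A_f,p^\star})]\ge\Omega(1)\cdot\EE_{f\sim\Suffix^t}[\density_f b_f^t]$. This inequality is false. Take $m=1$: a single set $S_0$ with $c_{S_0}=1$ and $A_{fS_0}=2/n$ for all $n$ elements. Then $x^\star_{S_0}=n/2$, $c(\OPT)=\Theta(n)$, $\optest=\Theta(n)$, and at the first timestep every $f$ has $b_f=1$, $\density_f=n/2$, $\kappa_f=n/2$, so $\Phi=\Theta(n^2)>\optest$, i.e., this timestep is counted in the sums over $t<\tau$. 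Truncation gives $\ip{A_f,p^\star}=2/n$, hence your left-hand side equals $1$ while the right-hand side is $\Theta(n)$; correspondingly $\gain_t(p^\star)=\Theta(1/\optest)=\Theta(1/n)$ while \ref{item:ass4} with $\beta,\delta=O(1)$ would require it to be $\Omega(\EE[c(\ALG_t)]/c(\OPT))=\Omega(1)$. The lever $\density_f A_{fS}\le c_S$ works against you here: after capping, the density-weighted coverage $f$ receives from a set $S$ is at most $c_S p^\star_S\le c_S$, whereas $\kappa_f^t$ can be as large as $\Theta(c(\OPT))$, so no cost-based accounting recovers a constant factor. (In this toy instance the summed form of \ref{item:ass4} happens to survive because the very first backup ends the bad phase; but that would be a different, global argument about how fast such elements get covered, which your sketch does not make, and your stated per-step inequality is what would have to be fed into \Cref{lem:gain-oco-application}.)

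The concern motivating the truncation—that without box constraints $x^\star$ can have coordinates exceeding $1$, while \Cref{lem:gain-oco-application} asks for $p^\star\in\scalesmpx$—is a legitimate one, and the paper passes over it silently by plugging in the LP optimum directly. But any repair must retain the full coverage of $x^\star$ rather than truncate it, since truncation destroys coverage exactly for the elements whose LP coverage is concentrated on large coordinates. The natural route is to enlarge the comparator (and action) set to the budget-only polytope $\{p\ge0:\ip{c,p}\le\optest\}$: in the rescaled coordinates of \Cref{sec:stoc-OMD} the budget constraint alone already places the comparator in the simplex, so the local-norm regret bound behind \Cref{thm:regretConcaveScaled} goes through without the per-coordinate cap, and coordinates $p^t_S>1$ can be implemented in the sampling step by buying an integer number of copies of $S$ with mean $(\kappa^t_{e^t}/\optest)p^t_S$—which is permissible precisely because CIPs have no box constraints. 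With that (or by simply following the paper and using $x^\star$ as the comparator), the one-line verification $\gain_{t,f}(p^\star)=\kappa_f^t/\optest$ applies; as written, your route to \ref{item:ass4} does not establish the theorem.
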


\begin{proof}
  It suffices to verify the properties
  in~\Cref{lem:gain-oco-application}.  The unbiasedness
  condition~\ref{item:ass2} is satisfied by definition. For the
  others, fix a time $t$, and condition on the history $\cH^t$ until
  time $t$.
  
  (\emph{Lower Bound on Real Gain}): Using \Cref{fct:crs} again and
  $\kappa_e^t \leq \optest$, we get that
  \begin{align*}
     \EE[b^t_f &- b^{t+1}_f \mid e^t=e]
    = \EE\Big[\min\Big(b_f^t, \sum_{S \in \cS} A_{f S} \cdot \ind{S \text{ is sampled at time } t} \Big) ~\Big|~ e^t = e\Big]\\
    &\ge \nf{1}{168} \cdot \min\Big(b_f^t, \sum_{S \in \cS} A_{f S} \cdot \Pr[S \text{ is sampled at time } t \mid e^t = e]  \Big)
    = \nf{1}{168} \cdot \min\Big(b_f^t, \sum_{S \in \cS} A_{f
      S} \cdot \frac{p_S^t \; \kappa^t_e}{\optest}\Big) \\
    &\geq \nf{1}{168} \cdot  \frac{\kappa^t_e}{\optest} \cdot
      \min\Big(b_f^t, \ip{A_f,p^t}\Big)
      = \nf{1}{168} \cdot \frac{\kappa^t_e}{\density_f} \cdot \gain_{t,f}(p^t).
  \end{align*}
  To prove property~\ref{item:ass1} we substitute the above
  expression back into~(\ref{eq:6}) and recall $\sum_{e \in \cU^t} \kappa^t_e = \Phi^t$ to get
  \begin{align*}
    \realgain_t(p^t) &\ge \nf{1}{168} \cdot 
      \underset{f \sim \Suffix^t}{\EE}
                       [\gain_{t,f}(p^t) ]  = \nf{1}{168}\cdot \gain_t(p^t).
  \end{align*}

  (\emph{Bounded Gradients}): To prove property~\ref{item:ass3}, we
  observe that $\nf{\partial}{\partial p_S} \; \gain_{t,f}(p) = 0$ if
  $b^t_f$ is the minimum in the $\gain_{t,f}(p)$ expression. Otherwise,
  using that $\density_f \leq c_S/A_{fS}$, we get
  \begin{align*}
    \frac{\partial}{\partial p_S} \gain_{t,f}(p) &= \frac{\density_f}{\optest} \cdot A_{f S} \le \frac{c_S}{\optest}.
  \end{align*}

  (\emph{Sufficient Static Gains}): As in previous applications,
  define $p^\star$ to be the optimal (fractional) solution to the
  linear programming relaxation of (\ref{eq:ROCIP}). The argument is a familiar one:
  since $p^\star$ is feasible, we have $\ip{A_f, p^\star} \geq b_f$,
  and hence
  $\gain_{t,f}(p^\star) = b_f^t \density_f/\optest =
  \kappa_f^t/\optest$. Property~\ref{item:ass4} follows from taking
  expectations again: 
  \[ \gain_t(p^\star) = \frac{1}{\optest} \cdot \underset{f \sim \Suffix^t}{\EE}[\kappa_f^t] \geq
    \frac{\EE[c(\ALG_t)]}{2(\approxfactor+1)\, c(\OPT)}.\]
  The theorem now follows by \Cref{lem:gain-oco-application}, and the
  discussion in~\Cref{sec:half-cover-augm} showing that $\approxfactor
  = 4$, and that the half-coverage problem can be converted (online)
  into a solution to the CIP with only a constant factor loss. 
\end{proof}

\subsection{Non-Metric Facility Location}\label{sec:nmfl}

As the third and final application, we consider random-order
non-metric facility location. There are clients/elements $\cU$ and
facilities $\cS$, which are all points in a distance space (which is
not required to satisfy the triangle inequality). Opening a facility
at location $i \in \cS$ incurs a cost of $c_i$; each client
$e \in \cU$ can then be connected to some open facility $i$,
incurring a cost of $d_{ie}$. Formally, we want a solution to the
following IP:
\begin{align}
    \min \quad & \sum_{i \in \cS} c_iy_i + \sum_{i \in \cS}\sum_{e \in
                 \Suffix} d_{ie}x_{ie} \notag\\
    \text{s.t.} \quad & \sum_{i \in \cS} x_{ie} = 1 && \forall e \in
                                                       \Suffix
                                                       \tag{\textsc{NMFL}}
  \label{eq:ronmfl}\\
    \phantom{\text{s.t.}\quad} & x_{ie} \le y_i && \forall i \in \cS,
                                                   e \in \Suffix  \notag\\
    \phantom{\text{s.t.}\quad} & y_i, x_{ie} \in \{0, 1\} && \forall i
                                                             \in \cS,
                                                             e \in
                                                             \Suffix. \notag
\end{align}
Above, $y_i$ indicates that facility $i$ is opened, and $x_{ie}$ indicates that client $e$ is connected to facility $i$.
In the random-order
version of the problem, clients $e \in \Suffix$ arrive according to a
uniform permutation; we may open some facilities when a client
arrives, but then have to connect it to an open facility and pay the
connection cost.

\subsubsection{Augmentation Costs and the Potential}

Let $\cR^t$ represent the facilities opened by the algorithm before
time step $t$. The augmentation cost for a generic client $e$ at time
$t$ with respect to collection $\cR^t$ of facilities is:
\[
    \aug(e, \cR^t) = \min_{i \in \cS}(\ind{i \notin \cR^t} c_i +
    d_{ie}) = \min\big(\min_{i \in \cR^t} d_{ie}, \min_{i \not\in
      \cR^t} (c_i + d_{ie})\big). 
\]
That is, we can either choose the closest open facility in $\cR^t$, or
open a new facility and connect to it. We define the augmentation cost estimate to be the cost itself, namely $\kappa^t_e = \aug(e, \cR^t)$, which means the approximation factor in
equation~\eqref{eq:kappa-gen} is $\approxfactor = 1$.

Next, conditioned on the algorithm’s history $\cH^t$ up to time $t$, the expected real gain at step $t$ is given by:
\begin{equation}
  \realgain_t = \frac{1}{\Phi^t \cdot |\Suffix^t|} \sum_{e,f \in \Suffix^t} \underset{\coins^t}{\EE}\big[\kappa^t_f - \kappa^{t+1}_f \mid e^t=e\big].
  \label{eq:rg-nmfl}
\end{equation}

As we did in previous applications, we introduce the
linearized/decoupled version of the gain for the problem at hand so
that it fits into the general framework described in
Section~\ref{sec:framework}. To this end, let
$\Gamma^t_f := \{ i \in \cS \mid d_{if} \le
\nicefrac{\kappa^t_f}{2}\}$ denote the set of facilities which, once
opened, make the cost of satisfying $f$ half (or less) than its
current cost $\kappa^t_f$, and let $\chi^t_f \in \{0,1\}^m$ be the
indicator vector of the set $\Gamma^t_f$. We define the linearized/decoupled gain as:
\begin{align*}
  \gain_t(p) := \mathbb{E}_{f \sim \Suffix^t}\left[\gain_{t,f}(p)\right],
  \quad \text{where} \quad
  \gain_{t,f}(p) := \frac{\kappa^t_f}{2\optest}  \cdot \min\big(1,
  \ip{\chi^t_f, p} \big).
\end{align*}
Using this definition of
$\kappa_e^t$ and $\gain_t(p)$, we instantiate the framework $\augoco$
on the polytope
$\{ p \in [0,1]^n \mid \ip{c, p} \leq \optest \}$.

\subsubsection{Competitiveness}

\begin{theorem}\label{thm:ronmfl-oco}
    There exists an $O(\log (mn))$-competitive algorithm for random-order non-metric facility location.
\end{theorem}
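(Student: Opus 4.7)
The plan is to verify the four hypotheses of \Cref{lem:gain-oco-application} for the instantiation of \augoco on non-metric facility location, with $\kappa^t_e = \aug(e, \cR^t)$ (so $\approxfactor = 1$) and the gain function $\gain_{t,f}$ defined above, and then invoke the lemma. Property~\ref{item:ass2} (unbiasedness) holds by construction since $\gain_t(p) = \EE_{f \sim \Suffix^t}[\gain_{t,f}(p)]$.

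For property~\ref{item:ass3} (bounded gradients), when the linear argument is the minimum one has $\frac{\partial}{\partial p_S} \gain_{t,f}(p) = \frac{\kappa^t_f}{2\optest} \chi^t_{f,S}$, so the task reduces to showing $\kappa^t_f/2 \le c_S$ whenever $\chi^t_{f,S} = 1$, i.e.\ $d_{Sf} \le \kappa^t_f/2$. If $S \notin \cR^t$, the minimality of $\aug$ yields $\kappa^t_f = \aug(f, \cR^t) \le c_S + d_{Sf} \le c_S + \kappa^t_f/2$, so $\kappa^t_f/2 \le c_S$. If $S \in \cR^t$, then monotonicity forces $\kappa^t_f \le d_{Sf} \le \kappa^t_f/2$, so $\kappa^t_f = 0$ and $\gain_{t,f}$ vanishes identically in $p_S$.

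For property~\ref{item:ass1} (lower bound on real gain), the key observation is that whenever \emph{any} facility in $\Gamma^t_f$ is sampled at step $t$, then $f$ can be connected at distance $\le \kappa^t_f/2$, so $\kappa^t_f - \kappa^{t+1}_f \ge \frac{\kappa^t_f}{2} \cdot \mathbf{1}[\Gamma^t_f \text{ hits the sampled facilities}]$. Since each $i \in \Gamma^t_f$ is sampled independently with probability $p^t_i \kappa^t_e/\optest$ conditional on $e^t = e$, the identity $1 - \prod_i (1 - a_i) \ge (1 - 1/e) \min(1, \sum_i a_i)$ combined with $\kappa^t_e \le \optest$ (which allows the scalar $\kappa^t_e/\optest \le 1$ to be factored out of the $\min$) yields $\EE[\kappa^t_f - \kappa^{t+1}_f \mid e^t = e] \ge (1 - 1/e)\, \kappa^t_e \cdot \gain_{t,f}(p^t)$. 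Substituting into~\eqref{eq:rg-nmfl} and using $\sum_{e \in \Suffix^t} \kappa^t_e = \Phi^t$ gives $\realgain_t \ge (1 - 1/e) \cdot \gain_t(p^t)$, establishing $\gamma = 1 - 1/e$.

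The principal obstacle is verifying~\ref{item:ass4} (sufficient static gains): unlike set cover or CIPs, the fractional constraint $\sum_i x^\star_{if} \ge 1$ does not directly force the opening variables $y^\star$ to put large mass on the ``near'' set $\Gamma^t_f$, because the LP is free to fractionally assign $f$ to distant facilities. My plan is to take $p^\star = y^\star$ (so $\langle c, p^\star \rangle \le c(\LP) \le \optest$) and perform a case split on each pair $(t, f)$ with $f \in \Suffix^t$: if $\sum_{i \in \Gamma^t_f} x^\star_{if} \ge \nf12$ then $\langle \chi^t_f, p^\star \rangle \ge \nf12$ (using $x^\star \le y^\star$) and $\gain_{t,f}(p^\star) \ge \kappa^t_f/(4\optest)$; otherwise the LP's connection cost for $f$ is at least $\kappa^t_f/4$, since strictly more than $\nf12$ fractional mass is assigned to facilities at distance $> \kappa^t_f/2$. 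The global accounting is then: summing the ``Case-2 deficit'' $\sum_{f \in \text{Case 2}} \kappa^t_f$ is bounded by four times the total LP connection cost, hence by $4\, c(\OPT)$. Averaging by $|\Suffix^t|$ and summing over $t < \tau$, the deficit contributes only $O(c(\OPT) \log n)$ via the harmonic bound $\sum_t 1/|\Suffix^t| = O(\log n)$. Combining this with the standard per-step cost bound $\EE[c(\ALG_t)] \le 2\, \EE_{e^t \sim \Suffix^t}[\kappa^t_{e^t}]$ gives $\EE[\sum_{t < \tau} c(\ALG_t)] \le c(\OPT) \cdot (O(1) \cdot \EE[\sum_{t<\tau} \gain_t(p^\star)] + O(\log n))$, i.e., $\beta = O(1)$ and $\delta = O(\log n)$. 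Plugging $\approxfactor = 1$, $\gamma = 1 - 1/e$, $\beta = O(1)$, and $\delta = O(\log n)$ into \Cref{lem:gain-oco-application} then yields the $O(\log(mn))$ competitive ratio.
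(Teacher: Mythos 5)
Your proposal is correct and follows essentially the same route as the paper's proof: identical gain function, the same $(1-\nf1e)$ bound for \ref{item:ass1} via the no-hit product, the same geometric argument $\kappa^t_f \le c_i + d_{if}$ for \ref{item:ass3}, and the same Easy/Hard case split (your Case~1/Case~2, with the inclusion $x^\star \le y^\star$ bridging the two formulations) plus the harmonic-sum bound $\sum_t 1/|\Suffix^t| = O(\log n)$ for \ref{item:ass4}. The only superficial difference is that you split the bounded-gradient check into $S \in \cR^t$ versus $S \notin \cR^t$, whereas the paper handles both cases uniformly via $\kappa^t_f \le \ind{i \notin \cR^t} c_i + d_{if} \le c_i + d_{if}$.
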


\begin{proof}
  Let us fix a time $t$, and condition on the history $\cH^t$ until time $t$. We check all properties in \Cref{lem:gain-oco-application}, except property \ref{item:ass2} (unbiasedness), since this is satisfied by definition of the $\gain_t$ function.
  
  (\emph{Lower Bound on Real Gain}) Fix a client $f$ and observe that if in iteration $t$ the algorithm samples some facility in $\Gamma^t_f$, then $\kappa^{t+1}_f \le \nicefrac{\kappa^t_f}{2}$ by definition. Therefore,
  \begin{align*}
    \underset{\coins^t}{\EE}[\,\kappa^t_f &- \kappa^{t+1}_f \mid e^t = e] \ge \nicefrac{\kappa^t_f}{2} \cdot \Pr(\textrm{some $i \in \Gamma^t_f$ is sampled at time $t$} \mid e^t = e) \\
                              &= \nicefrac{\kappa^t_f}{2} \cdot \bigg(1 - \prod_{i \in \Gamma^t_f} (1- (\nf{\kappa^t_e}{\optest}) \cdot p^t_i)\bigg) 
                                \ge \frac{1- \nf1e}{2} \cdot \kappa^t_f \cdot \min\bigg(1,
                                (\nf{\kappa^t_e}{\optest}) \cdot \sum_{i \in \Gamma^t_f} p^t_i
                                \bigg) \\
                              &\ge \frac{1- \nf1e}{2} \cdot \kappa^t_f \cdot
                                \frac{\kappa^t_e}{\optest} \cdot \min\big(1, \ip{
                                \chi^t_f, p^t}  \big) = (1- \nf1e) \cdot \kappa^t_e
                                \cdot \gain_{t,f}(p^t),
  \end{align*}
  where the first equality uses that, conditioned on $e^t = e$, the
  sampling probability of facility $i$ is
  $(\nf{\kappa^t_e}{\optest}) \cdot p^t_i$. Substituting back, we
  get that 
  \begin{align*}
    \realgain_t \ge (1- \nf1e) 
    \cdot \gain_{t}(p^t),
  \end{align*}
  showing property \ref{item:ass1}.

  (\emph{Bounded Gradients}): We have that the subgradient
  $\nf{\partial}{\partial p_i} \, \gain_{t,f}(p) = 0$ if
  $\sum_{i \in \Gamma^t_f} p^t_i \ge 1$ or if $i \notin
  \Gamma^t_f$. Otherwise, if facility $i \in \Gamma^t_f$, we have
  $d_{if} \le \nf{\kappa^t_f}{2}$.  Moreover, since $\kappa^t_f$ is
  the minimal augmentation cost for client $f$, we get that
  $\kappa^t_f \le c_i + d_{if}$. Combining the two inequalities, we
  infer $\kappa^t_f \le 2c_i$ and therefore
  \begin{align*}
    \frac{\partial}{\partial p_i} \gain_{t,f}(p) &\le \frac{\kappa^t_f}{2\optest} \leq \frac{c_i}{\optest},
  \end{align*}
  which shows property \ref{item:ass3}.

  (\emph{Sufficient Static Gains}): To show this property, set
  $p^\star = y^\star$, i.e., the optimal fractional solution to the
  linear programming relaxation of~(\ref{eq:ronmfl}). 
  Let $\textsc{Easy}^t$ be the set of remaining clients
  $f \in \Suffix^t$ such that $\ip{\chi^t_f, p^\star} \geq \nf12$,
  i.e., the optimal fractional solution opens at least half a facility
  in $\Gamma^t_f$. This means
  \begin{align*}
    \gain_t(p^\star)
    &= \E{f \sim
      \Suffix^t}{\gain_{t,f}(p^\star)} \geq \frac{1}{
    |\Suffix^t|}\sum_{f \in \textsc{Easy}^t} \frac{\kappa^t_f}{2\optest} \cdot \min(1,
                                         \ip{\chi^t_f, p^t}) \\
      &\ge \frac{1}{2\optest \cdot
    |\Suffix^t|}\sum_{f \in \textsc{Easy}^t} \kappa^t_f\cdot\nicefrac{1}{2} =
    \frac{1}{4\optest \cdot |\Suffix^t|}\Bigg(\sum_{f \in \Suffix^t}
    \kappa^t_f -\sum_{f \in \textsc{Hard}^t} \kappa^t_f \Bigg),
  \end{align*}
  where $\textsc{Hard}^t := \Suffix^t \setminus \textsc{Easy}^t$. If
  $f \in \textsc{Hard}$, then in the optimal fractional solution at
  least half of $f$'s connection has to come from facilities outside
  $\Gamma^t_f$, i.e., at distance at least $\nicefrac{\kappa^t_f}{2}$.
  This means the optimal fractional connection cost for $f$ is at
  least $\nicefrac{\kappa^t_f}{4}$.

  Observe that $\EE_{f \sim \Suffix^t}[\kappa^t_f] \leq 2\EE[c(\ALG_t)]$, and so we get
  \begin{align*}
    \gain_t(p^\star) \geq \frac{\EE[c(\ALG_t)]}{8\optest} - 
    \frac{1}{4\optest \cdot |\Suffix^t|} \sum_{f \in \textsc{Hard}^t} \kappa^t_f. 
  \end{align*}
  By the discussion above, any client $f \in \textsc{Hard}^t$ has
  that its augmentation cost $\kappa^t_f$ is at most four times the fractional connection cost
  for $f$ in $\OPT$, and hence
  $\sum_{f \in \textsc{Hard}^t} \kappa^t_f \leq 4c(\OPT)$. Moreover, the
  cardinality of $\Suffix^t$ is $n-t+1$, and hence the above
  expression becomes
  \begin{align}
    \gain_t(p^\star) \geq \frac{\EE[c(\ALG_t)]}{8\optest} - 
    \frac{c(\OPT)}{\optest \cdot (n-t+1)}. 
  \end{align}
  Finally, the fact that
  $c(\OPT) \le \optest \le 2c(\OPT)$  gives 
  \begin{align}
    \gain_t(p^\star) \geq \frac{\E{}{c(\ALG_t)}}{16\,c(\OPT)} - \frac{1}{n - t + 1}.
  \end{align}
  Adding over all times $t < \tau \leq n$ and taking expectations,
  we get
  \[
    \E{}{\sum_{t < \tau} c(\ALG_t)} \leq 16\,c(\OPT) \cdot \bigg(
    \E{}{\sum_{t < \tau} \gain_t(p^\star)} + \log n \bigg).
  \]
  thereby establishing property \ref{item:ass4}. \Cref{thm:ronmfl-oco} is, thus, implied by \Cref{lem:gain-oco-application}.  
\end{proof}

\section{Closing Remarks}
\label{sec:closing-remarks}

In this work, we present a unified and modular framework connecting random-order online algorithms to online learning. By isolating the learning component, we show that any suitable OCO algorithm can be used as a black-box subroutine to achieve optimal competitiveness. This approach recovers $O(\log mn)$-competitive algorithms for classic problems—Weighted Set Cover, Unweighted Set Multicover, Covering Integer Programs, and Non-metric Facility Location—through a single, clean template.

\paragraph{Future Directions.} The framework outlined above appears conceptually general and opens several promising directions for future work. A primary challenge lies in extending it to broader classes of problems with box constraints---for example, unweighted \emph{multiset} multicover (i.e., unweighted set multicover where the matrix $A$ is not Boolean). For such settings, it remains unclear how to establish the requisite bounds, in particular, designing a function $\gain_{t,f}$ that both has bounded gradients (property~\ref{item:ass3}) and satisfies the static gains condition (property~\ref{item:ass4}). This is a key open question and a compelling direction for further understanding the workings and limitations of the technique.

\clearpage

\bigskip
\appendix
{\LARGE  \bf Appendix}

\section{Stochastic Online Mirror Descent}
\label{sec:stoc-OMD}

In this section, we give prove~\Cref{thm:regretConcaveScaled}, the
regret bound for online concave gain maximization, when at each
timestep, the feedback we get is a (bounded) unbiased estimator of the
gradient (restated for convenience).

\thmregretConcaveScaled*

While the ideas are relatively standard, we need
multiplicative-additive bounds for arbitrary subsets of scaled
simplices. Lacking a convenient reference, we give a proof for completeness.

\subsection{OMD Regret Bounds using Local Norms}
\label{sec:omd-regret-bounds}

The starting point is the following general result on regret minimization using Online Mirror Descent (OMD) from \cite[Lemma~6.33]{orabona-v7}. Consider convex loss
functions $\ell_t: \cX \rightarrow \R$ over a non-empty closed convex set
$\cV \subseteq \cX$ (assume $\cX$ is also a non-empty closed convex set). Consider a twice differentiable function $\psi : \cX \rightarrow \R$ with Hessian being positive definite in the interior of its domain, and let $B_\psi$ be the Bregman divergence w.r.t.. As in \cite[Assumption~6.5]{orabona-v7}, assume
$\lim_{\lambda \rightarrow 0} \ip{\nabla \psi(x + \lambda (y-x)), y-x}
= - \infty$ for all $x \in \text{bdry}(\cX)$ and
$y \in \text{int}(\cX)$. Given a square matrix $A$, define the $A$-norm to be
$\|x\|_A := \sqrt{x^\intercal A x}$.

\begin{lemma}[\cite{orabona-v7} OMD Regret via Local Norms]
  \label{lem:local-norm}
  Given the above setup, define
  \begin{gather}
    x_{t+1} \in \argmin_{x \in \cV} \ip{\partial \ell_t(x_t),
      x} + \frac{1}{\eta_t} B_{\psi}(x,x_t)
    \quad\text{and} \label{eq:OMD-iter1}  \\
    \tilde{x}_{t+1} \in \argmin_{x \in \cX} \ip{\partial \ell_t(x_t),
      x} + \frac{1}{\eta_t} B_{\psi}(x,x_t); \label{eq:OMD-iter2}
  \end{gather}
  suppose $x_{t+1}, \tilde{x}_{t+1}$ exist. 
  Then
  for every $u \in \cV$, there exists $\tilde{z}_t$ in the segment
  between $x_t$ and $\tilde{x}_{t+1}$ such that
  \begin{align}
    \ell_t(x_t) - \ell_t(u) \le \frac{1}{\eta_t} \big( B_{\psi}(u; x_t) -
    B_{\psi}(u; x_{t+1}) \big) + \frac{\eta_t}{2} \|\partial
    \ell_t(x_t)\|^2_{(\nabla^2 \psi(\tilde{z}_t))^{-1}}. \label{eq:ora-bound}
  \end{align}
\end{lemma}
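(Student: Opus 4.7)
The plan is to derive the per-step bound by starting from convexity of $\ell_t$ and then manipulating the RHS using (i) the first-order optimality of the unconstrained iterate $\tilde{x}_{t+1}$, (ii) a Bregman three-point identity, (iii) the generalized Pythagorean theorem for Bregman projections to pass from $\tilde{x}_{t+1}$ to $x_{t+1}$, and (iv) a Taylor expansion to produce the local-norm term.

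First I would invoke convexity: $\ell_t(x_t) - \ell_t(u) \le \ip{g_t, x_t - u}$, where $g_t := \partial \ell_t(x_t)$. The boundary condition on $\psi$ forces $\tilde{x}_{t+1} \in \mathrm{int}(\cX)$, so first-order optimality of \eqref{eq:OMD-iter2} yields $\eta_t g_t = \nabla\psi(x_t) - \nabla\psi(\tilde{x}_{t+1})$. Substituting and applying the standard three-point identity $\ip{\nabla\psi(a)-\nabla\psi(b), a-c} = B_\psi(c,b) - B_\psi(c,a) + B_\psi(a,b)$ (with $a=x_t$, $b=\tilde{x}_{t+1}$, $c=u$) gives
\[
\ip{g_t, x_t - u} \;=\; \tfrac{1}{\eta_t}\bigl(B_\psi(u,x_t) - B_\psi(u,\tilde{x}_{t+1}) + B_\psi(x_t,\tilde{x}_{t+1})\bigr).
\]

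Next I would pass from $\tilde{x}_{t+1}$ to the projected iterate $x_{t+1}$. Plugging $\eta_t g_t = \nabla\psi(x_t) - \nabla\psi(\tilde{x}_{t+1})$ back into the objective defining $x_{t+1}$ in \eqref{eq:OMD-iter1} and expanding shows that $\ip{g_t, x} + \tfrac{1}{\eta_t} B_\psi(x,x_t) = \tfrac{1}{\eta_t} B_\psi(x,\tilde{x}_{t+1}) + \text{const}$, so $x_{t+1} = \arg\min_{x\in\cV} B_\psi(x,\tilde{x}_{t+1})$ is the Bregman projection of $\tilde{x}_{t+1}$ onto $\cV$. The generalized Pythagorean inequality then yields $B_\psi(u,\tilde{x}_{t+1}) \ge B_\psi(u,x_{t+1})$ for every $u\in\cV$, so
\[
\ell_t(x_t) - \ell_t(u) \;\le\; \tfrac{1}{\eta_t}\bigl(B_\psi(u,x_t) - B_\psi(u,x_{t+1})\bigr) + \tfrac{1}{\eta_t}\,B_\psi(x_t,\tilde{x}_{t+1}).
\]

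The final and subtlest step is to rewrite $\tfrac{1}{\eta_t} B_\psi(x_t,\tilde{x}_{t+1})$ as $\tfrac{\eta_t}{2}\|g_t\|_{(\nabla^2\psi(\tilde z_t))^{-1}}^2$ for some $\tilde z_t$ on the segment $[x_t,\tilde{x}_{t+1}]$. Applying Taylor's theorem in one variable to $\phi(s) := \psi(x_t + s(\tilde{x}_{t+1}-x_t))$, together with the symmetry identity $B_\psi(x_t,\tilde x_{t+1}) + B_\psi(\tilde x_{t+1},x_t) = \ip{\nabla\psi(x_t)-\nabla\psi(\tilde x_{t+1}), x_t-\tilde x_{t+1}} = \eta_t\ip{g_t, x_t-\tilde x_{t+1}}$, lets one express $B_\psi(x_t,\tilde x_{t+1})$ as a quadratic form $\tfrac{1}{2}\|x_t-\tilde x_{t+1}\|_{\nabla^2\psi(\tilde z_t)}^2$. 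Combining with the relation $\eta_t\ip{g_t, x_t-\tilde x_{t+1}} = \|x_t-\tilde x_{t+1}\|_{\nabla^2\psi(\cdot)}^2$ (integral mean-value form) and Cauchy-Schwarz in the local norm gives $\|x_t-\tilde x_{t+1}\|_{\nabla^2\psi(\tilde z_t)} \le \eta_t\|g_t\|_{(\nabla^2\psi(\tilde z_t))^{-1}}$, which plugged back produces the claimed $\tfrac{\eta_t}{2}\|g_t\|_{(\nabla^2\psi(\tilde z_t))^{-1}}^2$ bound.

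The main obstacle is this last paragraph: ensuring that the single point $\tilde z_t$ on the primal segment $[x_t,\tilde{x}_{t+1}]$ can simultaneously witness both the Taylor remainder for $B_\psi(x_t,\tilde x_{t+1})$ and the local norm of $g_t$. This requires a careful application of the multivariate second-order mean-value theorem (via the 1D reduction $\phi(s)$) so that the same Hessian $\nabla^2\psi(\tilde z_t)$ arises on both sides of the Cauchy-Schwarz step; once this is handled, the rest of the argument is a direct chain of substitutions.
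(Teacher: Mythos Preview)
The paper does not prove this lemma; it is quoted directly from \cite[Lemma~6.33]{orabona-v7}, so there is no in-paper argument to compare against. Your outline through the Pythagorean step is the standard one and is correct (modulo a sign slip in the stated three-point identity---the displayed consequence you then write down is nevertheless the right one).

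The genuine gap is precisely where you flag it. As you have organized the last step, you need a \emph{single} $\tilde z_t$ on the primal segment that simultaneously witnesses the Taylor remainder for $B_\psi(x_t,\tilde x_{t+1})$ and the relation $\eta_t\ip{g_t,x_t-\tilde x_{t+1}}=\|x_t-\tilde x_{t+1}\|^2_{\nabla^2\psi(\cdot)}$; in general the two mean-value points do not coincide, and there is no reason the image of the dual segment under $(\nabla\psi)^{-1}$ lies on the primal segment. The clean fix (which is also Orabona's actual route) sidesteps this entirely. Combine your own symmetry identity with the first-order condition $\eta_t g_t=\nabla\psi(x_t)-\nabla\psi(\tilde x_{t+1})$ to rewrite the residual as
\[
\tfrac{1}{\eta_t}\,B_\psi(x_t,\tilde x_{t+1}) \;=\; \ip{g_t,\,x_t-\tilde x_{t+1}} \;-\; \tfrac{1}{\eta_t}\,B_\psi(\tilde x_{t+1},x_t).
\]
Now apply Taylor \emph{once}, to $B_\psi(\tilde x_{t+1},x_t)=\tfrac12\|x_t-\tilde x_{t+1}\|_{\nabla^2\psi(\tilde z_t)}^2$ for some $\tilde z_t$ on the segment, and then Fenchel--Young with that \emph{same} Hessian:
\[
\ip{g_t,\,x_t-\tilde x_{t+1}} \;\le\; \tfrac{\eta_t}{2}\,\|g_t\|_{(\nabla^2\psi(\tilde z_t))^{-1}}^2 \;+\; \tfrac{1}{2\eta_t}\,\|x_t-\tilde x_{t+1}\|_{\nabla^2\psi(\tilde z_t)}^2.
\]
The quadratic terms cancel, leaving exactly $\tfrac{\eta_t}{2}\|g_t\|_{(\nabla^2\psi(\tilde z_t))^{-1}}^2$ with a single $\tilde z_t$; no second mean-value point is needed.
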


Applying this with $\cX = \R^d_+$ and
$\cV \sse \fullsimplex = \{x \in [0,1]^d \mid \sum_i x_i \leq 1\}$ and
with $\psi$ being the unnormalized entropy function
$\psi(x) = \sum_i (x_i \ln x_i - x_i)$, we get the Bregman divergence
being the (unnormalized) Kullback-Liebler divergence
$\text{uKL}(p; q) = \sum_i p_i \log \nicefrac{p_i}{q_i} - p_i +
q_i$. This setting satisfies the conditions above
\Cref{lem:local-norm}. We can now
derive the following corollary:
\begin{corollary}[Subsets of the Full Simplex]
  \label{cor:regret}
  Suppose $\eta_t = \eta \in (0,1]$. Consider convex functions $\ell_t$ whose
  subgradients satisfy the $\ell_\infty$-boundedness condition
  $\|\partial \ell_t(x)\|_{\infty} \le 1$. Let $\cV \sse
  \fullsimplex$. Then the OMD updates given by~(\ref{eq:OMD-iter1})
  satisfy that for any $u \in \cV$,
  \begin{align*}
    \ell_t(x_t) - \ell_t(u) \le \frac{1}{\eta} \big(
    \textup{uKL}(u; x_t) - \textup{uKL}(u; x_{t+1}) \big)
    + O(\eta) \cdot \sum_i (\partial_i
    \ell_t(x_t))^2 \cdot x_{t,i}. 
  \end{align*}
\end{corollary}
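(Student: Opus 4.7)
The plan is to invoke \Cref{lem:local-norm} and then specialize all three quantities that depend on $\psi$ (the Bregman divergence, the inverse Hessian, and the unconstrained update $\tilde x_{t+1}$) to the unnormalized entropy on $\cX=\R^d_+$. First, recall that for $\psi(x)=\sum_i(x_i\ln x_i - x_i)$, the induced Bregman divergence is the unnormalized KL divergence $B_\psi(p;q)=\textup{uKL}(p;q)$, and the Hessian is diagonal with $\nabla^2\psi(x)=\operatorname{diag}(1/x_i)$; hence its inverse is $\operatorname{diag}(x_i)$ and the local-norm squared appearing in~\eqref{eq:ora-bound} becomes the weighted sum $\sum_i (\partial_i\ell_t(x_t))^2\,\tilde z_{t,i}$. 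One also checks the standing assumptions of \Cref{lem:local-norm} (twice-differentiability, positive-definiteness on $\operatorname{int}\R^d_+$, and the boundary barrier condition $\langle\nabla\psi(x+\lambda(y-x)),y-x\rangle\to-\infty$ as $\lambda\to 0$); these are all classical for the entropy on $\R^d_+$.

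The heart of the argument is to replace $\tilde z_{t,i}$ on the right-hand side by $x_{t,i}$ up to a universal constant. For this I would use that on the unconstrained set $\cX=\R^d_+$ the auxiliary update~\eqref{eq:OMD-iter2} admits a closed form: setting the gradient of $\langle\partial\ell_t(x_t),x\rangle+\frac{1}{\eta}\textup{uKL}(x;x_t)$ to zero in the interior gives the coordinatewise exponential update
\begin{align*}
\tilde x_{t+1,i} \;=\; x_{t,i}\,\exp\!\bigl(-\eta\,\partial_i\ell_t(x_t)\bigr).
\end{align*}
Since $\|\partial\ell_t(x_t)\|_\infty\le 1$ and $\eta\in(0,1]$, each ratio $\tilde x_{t+1,i}/x_{t,i}$ lies in $[e^{-1},e]$. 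Because $\tilde z_t$ lies on the segment between $x_t$ and $\tilde x_{t+1}$, we get coordinatewise
\begin{align*}
\tilde z_{t,i} \;\le\; \max\bigl(x_{t,i},\,\tilde x_{t+1,i}\bigr) \;\le\; e\cdot x_{t,i}.
\end{align*}

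Plugging this into~\eqref{eq:ora-bound} with $\eta_t=\eta$ yields
\begin{align*}
\ell_t(x_t)-\ell_t(u) \;\le\; \tfrac{1}{\eta}\bigl(\textup{uKL}(u;x_t)-\textup{uKL}(u;x_{t+1})\bigr) \;+\; \tfrac{e\,\eta}{2}\sum_i (\partial_i\ell_t(x_t))^2\,x_{t,i},
\end{align*}
which is the claim (the constant $e/2$ is absorbed into the $O(\eta)$).

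The main potential obstacle is the control of $\tilde z_t$, since \Cref{lem:local-norm} only guarantees its existence on the segment without telling us where; the exponential closed form for $\tilde x_{t+1}$ together with the hypotheses $\|\partial\ell_t\|_\infty\le 1$ and $\eta\le 1$ is exactly what makes the ratio bound fall out in one line, and it is important that the step size cap $\eta\le 1$ appears in the hypothesis so that $e^\eta\le e$ is a constant. Minor technical care is needed to ensure that the iterates $x_t$ stay strictly positive so that the exponential update and the ratios $\tilde x_{t+1,i}/x_{t,i}$ are well-defined; this is standard, by initializing $x_1$ in the relative interior of $\fullsimplex$ (e.g., the uniform point) and observing that entropic updates preserve strict positivity.
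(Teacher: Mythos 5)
Your proof is correct and takes essentially the same route as the paper's: both derive the diagonal inverse Hessian, obtain the closed-form exponential expression for the unconstrained update $\tilde x_{t+1}$, and use $\|\partial \ell_t\|_\infty\le 1$ together with $\eta\le 1$ to sandwich $\tilde z_t$ between $x_t$ and $e\cdot x_t$ before plugging into~\eqref{eq:ora-bound}. Yours is slightly more spelled out (you state the ratio bound $[e^{-1},e]$ and flag the positivity-of-iterates point explicitly), but the substance is identical.
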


\begin{proof}
  The Hessian is
  $\nabla^2 \psi(\tilde{z}_t) =
  \text{diag}(1/\tilde{z}_{t,i})$. Hence, for any $v$,
  $\|v\|_{(\nabla^2 \psi(\tilde{z}))^{-1}}^2 = \sum_i v_i^2 \cdot
  \tilde{z}_i$. Moreover, solving the KKT optimality conditions shows
  that the (unconstrained) minimizer $\tilde{x}_{t+1} = x_t \cdot
  e^{-\eta \partial \ell_t(x_t)} \leq e\cdot x_t$, since the gradients
  are bounded in $[-1,1]$ by assumption. Since $\tilde{z}_t$ lies on
  the line segment between $x_t$ and $\tilde{x}_{t+1}$, its coordinates are
  sandwiched between those of the two vectors, and hence 
  \[ \frac{\eta_t}{2} \|\partial \ell_t(x_t)\|^2_{(\nabla^2
      \psi(\tilde{z}_t))^{-1}} \leq \frac{e \eta}{2} \cdot \sum_i
    (\partial_i \ell_t(x_t))^2 \cdot x_{t,i}.
  \]
  The result then follows from \Cref{lem:local-norm}.
\end{proof}

When we are given a linear \emph{gain} function $\ip{a_t, x}$
for $a_t \in [0,1]^d$, we can define $\ell_t(x) = -\ip{a_t, x}$, and use
the fact that $\partial \ell_t(x) = - a_t$
and $a_{t,i}^2 \leq a_{t,i}$ to derive that for any $u \in \cV$,
\begin{align*}
  (1+O(\eta)) \ip{a_t, x_t} - \ip{a_t, u} \geq - \frac{1}{\eta} \big(
  \textup{uKL}(u; x_t) - \textup{uKL}(u; x_{t+1}) \big).
\end{align*}
Summing over all times $t$, and simplifying, 
\begin{align*}
  \sum_t \ip{a_t, x_t} \geq (1-O(\eta)) \sum_t \ip{a_t, u} - O(1) \cdot
  \frac{\textup{uKL}(u; x_0)}{\eta}.
\end{align*}
Choosing $x_0 = \nf1d \cdot \mathbf{1}$, and $u \in \cV \sse \fullsimplex$, the last term is at
most $O(\log d)$; suitably changing the constants in the value of
$\eta$, we get the familiar regret bound for linear gain functions: for every $T$,
\begin{align}
  \forall u \in \cV, \qquad \sum_{t \le T} \ip{a_t, x_t} \geq (1-\eta) \sum_{t \le T} \ip{a_t, u} -
  \frac{O(\log d)}{\eta}, \label{eq:gainOCO}
\end{align}
but now for constrained optimization over $\cV$ instead of over the full simplex.

\subsection{Extending to Stochastic Gradients}

We now extend to the setting where we get unbiased estimates of the
(sub)gradient, instead of getting the subgradient itself. In our
application, at each time $t$,
\begin{enumerate}
\item The algorithm plays some action $x_t \in \cV$, which depends on
  $\cH^t$, the history of everything that has happened at timesteps
  before $t$.
\item Then the adversary chooses a \emph{concave} gain function
  $g_t: \R^d \to \R_+$; this may depend on the history $\cH^t$ and
  also the algorithm's action $x_t$.

\item The algorithm sees a random vector $G_t \in \R^d$ whose expectation conditioned on the history $\cH^t$ and the action $x_t$ equals $\partial g_t(x_t)$. We assume that $G_t \in [0,1]^d$ with probability 1.

\end{enumerate}

Let $\tau$ be any stopping time adapted to the history sequence $\cH^1,\cH^2,\ldots$. The first observation is that for any particular sample path, the
bound from \eqref{eq:gainOCO} gives us that for any $x^* \in \cV$,
\begin{gather}
  \sum_{t \le \tau} \ip{G_t, x_t} \geq  (1-\eta) \, \sum_{t \le \tau} \ip{G_t,
    x^*} - \frac{O(\log d)}{\eta}. \label{eq:OLO-det}
\end{gather}
Since $\tau$ is a stopping time adapted to $(\cH^t)_t$ and $x_t$ is completely determined by the history $\cH^t$, we have $\E{}{\ip{G_t, x_t} \cdot \ones(t \le \tau) \mid \cH^t} = \ip{\E{}{G_t \mid \cH^t}, x_t} \cdot \ones(t \le \tau) = \ip{\partial g_t(x_t), x_t} \cdot \ones(t \le \tau)$, and similarly if we replace $x_t$ for $x^*$. Then adding \eqref{eq:OLO-det} over all $t$ and taking expectations we get that 
\begin{gather}
  \EE\Big[\sum_{t \le \tau} \ip{\partial g_t(x_t), x_t}\Big] \geq  (1-\eta) \, \EE\Big[\sum_{t \le \tau} \ip{\partial g_t(x_t),
    x^*}\Big] - \frac{O(\log d)}{\eta}. \label{eq:OLO-stoc}
\end{gather}
Note that we have expectations on both sides, since the choice of
$h_t$ is allowed to depend on the algorithm's actions, which are
themselves random.

But now, for the concave function $g_t$, we have the property that
\begin{gather*}
  g_t(x_t) - g_t(x^*) \geq \ip{ \partial g_t(x_t), x_t - x^*}, \quad \text{and} \\
  g_t(x_t) - g_t(0) \geq \ip{ \partial g_t(x_t), x_t}.
\end{gather*}
Using that $g_t(0) \geq 0$, multiplying the first inequality by
$(1-\eta)$ and the second by $\eta$, and summing, we get
\begin{gather}
  g_t(x_t) - (1-\eta)\, g_t(x^*) \geq \ip{\partial g_t(x_t), x_t} - (1-\eta)\,
  \ip{\partial g_t(x_t), x^*}. \label{eq:OLO-to-OCO}
\end{gather}
Combining this with \eqref{eq:OLO-stoc} gives that for any $x^* \in \cV$,
\begin{gather}
  \EE\Big[\sum_{t \le \tau} g_t(x_t)\Big] \geq  (1-\eta) \, \EE\Big[\sum_{t \le \tau} g_t(x^*)\Big] - \frac{O(\log d)}{\eta}, \label{eq:OCO-stoc}
\end{gather}
which is what we wanted to prove.

\subsection{Scaled Simplex}
\label{sec:scaled-simplex}

We now need to extend this to the action space 
$\scalesmpx := \{y \in [0,1]^d \mid \ip{c,y} \leq M\}$, and concave functions $h_t$ whose stochastic gradient estimates $H_t$ satisfy $H_{t,i} \in [0, \frac{c_i}{M}]$ for each coordinate $i \in [d]$. 

For that, one can just linearly transform the space. More precisely,
we define $\cV$ as the truncated simplex $\{x \in [0, \frac{c_i}{M}] :
\sum_i x_i \le 1\}$, which is just the scaling of the playing set
$\scalesmpx$ given by $x_i := \frac{c_i y_i}{M}$, and define the
function $g_t(x) := h_t(\frac{M x}{c})$ (where $\frac{M x}{c}$ is
defined coordinate-wise, i.e., its $i$th coordinate is $\frac{M
  x_i}{c_i}$). Using the chain rule, we have $\partial_i g_t(x) =
\frac{M}{c_i} \cdot \partial_i h_t(\frac{Mx}{c}) = \frac{M}{c_i} \cdot
\partial_i h_t(y)$, the last equation using the mapping between $x$- and $y$-space. Thus, the value $G_t := \frac{M}{c} \cdot H_t$ is an unbiased estimator of $\partial g_t(x_t)$ (given that $H_t$ is an unbiased estimator of $\partial h_t(y_t)$) and $G_t \in [0,1]^d$ (given that $H_{t,i} \in [0, \frac{c_i}{M}]$).

Let $y^\star := \frac{M x^*}{c}$; note that since $x^*$ ranges over all $\cV$, $y^\star$ ranges over all $\scalesmpx$.
The bound \eqref{eq:OCO-stoc} translates into the following guarantee:
\begin{align*}
    \forall y^\star \in \scalesmpx, \qquad \EE\Big[\sum_{t \le \tau} h_t(y_t)\big] \ge (1-\eta) \EE\Big[ \sum_{t \le \tau} h_t(y^\star) \Big]  - \frac{O(\log d)}{\eta}\,.
\end{align*}
This finally concludes the proof of \Cref{thm:regretConcaveScaled}.

\section{Probabilistic Inequalities}
\label{sec:prob-ineq}

The following claim is quite standard, but we give a proof here for
completeness. 
\begin{claim}
  \label{clm:1-1overe}
  For non-negative reals $a_1, a_2, \ldots$,
  we have $1 - \prod_{k} (1-a_k) \geq (1-\nf1e) \cdot \min(1, \sum_{k}
    a_k)$.
\end{claim}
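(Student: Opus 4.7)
The plan is to reduce the product to an exponential via $1-x \le e^{-x}$, and then handle the two regimes of $\sum_k a_k$ separately. Concretely, assuming each $a_k \in [0,1]$ (which is the regime in which the product on the left makes sense as the complement of a ``survival'' probability, and is the setting in which the lemma is applied in the main text), I would write $\prod_k (1-a_k) \le \prod_k e^{-a_k} = e^{-s}$ where $s := \sum_k a_k$, so that
\[
1 - \prod_k (1-a_k) \;\ge\; 1 - e^{-s}.
\]
It therefore suffices to show that $1 - e^{-s} \ge (1-\nf{1}{e}) \cdot \min(1,s)$ for every $s \ge 0$.

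For the regime $s \ge 1$, $\min(1,s)=1$ and the map $s \mapsto 1-e^{-s}$ is monotonically increasing, so $1-e^{-s} \ge 1 - e^{-1} = 1-\nf{1}{e}$, which settles this case. For the regime $s \in [0,1]$ I would consider the auxiliary function $\varphi(s) := 1 - e^{-s} - (1-\nf{1}{e})\,s$, check that $\varphi(0)=0$ and $\varphi(1)=0$, and note that $\varphi''(s) = -e^{-s} < 0$, so $\varphi$ is strictly concave on $[0,1]$. A concave function that is nonnegative at both endpoints of an interval is nonnegative throughout the interval, so $\varphi(s) \ge 0$ on $[0,1]$, which gives $1 - e^{-s} \ge (1-\nf{1}{e})\,s = (1-\nf{1}{e}) \min(1,s)$ in this regime as well.

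Chaining these two inequalities yields $1 - \prod_k(1-a_k) \ge 1 - e^{-s} \ge (1-\nf{1}{e})\min(1,s)$, which is the desired bound. There is no real obstacle here; the only mild subtlety is verifying the endpoint-and-concavity argument for $\varphi$ rather than appealing to the weaker bound $1-e^{-s} \ge (1-\nf{1}{e}) s$ only for $s \le 1$ without justification, and being explicit that the case split on $\min(1,s)$ is what allows the single global constant $1 - \nf{1}{e}$ to work across both regimes.
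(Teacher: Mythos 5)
Your proof is correct and uses essentially the same approach as the paper: bound the product via $1-a_k \le e^{-a_k}$ and then verify $1-e^{-s} \ge (1-\nf{1}{e})\min(1,s)$. The only cosmetic difference is that the paper handles the case $\sum_k a_k > 1$ via a WLOG reduction (decrease some $a_k$'s until the sum is $1$, which can only lower the LHS while the RHS stays at $1$) and then minimizes $(1-e^{-y})/y$ on $(0,1]$, whereas you perform an explicit two-case split and argue concavity of the auxiliary function $\varphi$; both are valid and equally elementary.
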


\begin{proof}
  Without loss of generality, assume that $\sum_k a_k \leq 1$, else we
  can reduce some of $a_k$ values until this is satisfied; this only
  reduces the LHS without reducing the RHS. Now, 
  \[ 1 - \prod_k (1-a_k) \ge 1 - \exp\Big(-\sum_k a_k\Big) \ge (1-\nf{1}{e})
    \sum_k a_k, \] where the first inequality uses $1+y \leq e^{y}$
  for all reals $y$, and the second follows by minimizing
  $(1-e^{-y})/y$, for $y \in (0,1]$. 
\end{proof}

\begin{fact}[Fact 4.4 in \citep{GuptaKL21}]\label{fct:crs}   
    Let $\pi_j \in [0,1]$ be probabilities and $\lambda_j \in [0,1]$ be corresponding weights. Define
    \(
        \Lambda = \sum_j \lambda_j \cdot \textup{Ber}(\pi_j),
    \)
    as the sum of independent Bernoulli random variables scaled by the weights $\lambda_j$. Then, for any constant $C \ge \nf{1}{(e - 1)}$, it holds that
    \(
        \mathbb{E}\left[\min(\Lambda, C)\right] \ge \frac{1}{168} \cdot \min\left(\mathbb{E}[\Lambda], C\right).
    \)
\end{fact}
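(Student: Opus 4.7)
The plan is to split the analysis into two regimes based on whether $\mu := \mathbb{E}[\Lambda] = \sum_j \lambda_j \pi_j$ is at most $C$ or exceeds $C$; in each case, the bound $\mathbb{E}[\min(\Lambda, C)] \ge \tfrac{1}{168}\min(\mu, C)$ reduces to showing that $\Lambda$ concentrates around its mean to within a constant factor. The two basic tools I would use throughout are: (a)~the variance bound $\mathrm{Var}(\Lambda) = \sum_j \lambda_j^2 \pi_j(1-\pi_j) \le \sum_j \lambda_j \pi_j = \mu$, which uses $\lambda_j \in [0,1]$; and (b)~the fact that each summand $\lambda_j \textup{Ber}(\pi_j)$ lies in $[0,1]$, enabling Chernoff/Bennett-type concentration.

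For the ``small mean'' regime $\mu \le C$, I would rewrite $\mathbb{E}[\min(\Lambda, C)] = \mu - \mathbb{E}[(\Lambda - C)_+]$ and bound the overflow term from above, e.g.~by $\mathbb{E}[\Lambda \cdot \mathbbm{1}(\Lambda \ge C)] \le \sqrt{\mathbb{E}[\Lambda^2]\cdot \mathbb{P}(\Lambda \ge C)}$ via Cauchy--Schwarz; combining Chebyshev with the variance bound then controls $\mathbb{P}(\Lambda \ge C)$, and the hypothesis $C \ge 1/(e-1)$ is exactly what is needed to make the resulting estimate strictly smaller than $(1 - \tfrac{1}{168})\,\mu$. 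For the complementary ``large mean'' regime $\mu > C$, I would apply a Paley--Zygmund-style argument to the truncated variable $\tilde\Lambda := \min(\Lambda, C)$: its mean can be shown to be $\Omega(C)$ (the crux, argued inductively or via direct splitting), and independence of the summands gives a matching upper bound on $\mathbb{E}[\tilde\Lambda^2]$, yielding $\mathbb{P}(\Lambda \ge C/2) = \Omega(1)$ and hence $\mathbb{E}[\min(\Lambda, C)] = \Omega(C)$.

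The main obstacle I anticipate is the heterogeneity of the weights $\lambda_j \in [0,1]$, which obstructs clean Chernoff-style arguments when a single summand is already comparable in magnitude to $C$. The standard workaround is to split the indices $j$ into ``heavy'' (those for which $\lambda_j \pi_j$ is already a constant fraction of $C$) and ``light'' ones: a single heavy summand already contributes $\Omega(C)$ to $\min(\Lambda, C)$ with constant probability by a direct calculation, while the light summands admit a clean concentration argument. The particular constant $\tfrac{1}{168}$ most likely falls out of optimizing constants across this case split, and this constant-tracking is the step I expect to require the most care; the qualitative statement (some absolute constant in place of $\tfrac{1}{168}$) is quite robust.
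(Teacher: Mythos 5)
The paper does not prove this statement---it is quoted verbatim as Fact~4.4 of \cite{GuptaKL21} in Appendix~\ref{sec:prob-ineq}---so there is no in-paper proof to compare against. Evaluating your outline on its own terms, the two steps you actually spell out would both fail, and the step that would carry the argument is only gestured at. In the ``small mean'' regime your Cauchy--Schwarz/Chebyshev chain gives
\[
\mathbb{E}\bigl[(\Lambda - C)_+\bigr] \;\le\; \sqrt{\mathbb{E}[\Lambda^2]\,\Pr[\Lambda\ge C]} \;\le\; \sqrt{(\mu+\mu^2)\cdot \tfrac{\mu}{(C-\mu)^2}} \;=\; \frac{\mu\sqrt{1+\mu}}{\,C-\mu\,},
\]
and since $C$ may be as small as $1/(e-1)\approx 0.582$, this bound already exceeds $\mu$ even as $\mu\to 0$ (it tends to $\mu/C\approx 1.72\,\mu$). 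To push it below $(1-\tfrac1{168})\mu$ you would need $C-\mu\gtrsim \sqrt{1+\mu}$, which is impossible for any $\mu\ge 0$ and $C\le 1$; so this route gives nothing in exactly the setting where $C=\Theta(1)$. In the ``large mean'' regime your Paley--Zygmund step applied to $\tilde\Lambda=\min(\Lambda,C)$ is circular: it requires $\mathbb{E}[\tilde\Lambda]=\Omega(C)$ as input, which is precisely the conclusion. (A non-circular variant does work---apply PZ to $\Lambda$ itself to get $\Pr[\Lambda>\mu/2]\ge\tfrac14\,\mu^2/\mathbb{E}[\Lambda^2]\ge \tfrac{\mu}{4(1+\mu)}$ and then use $\min(\Lambda,C)\ge (C/2)\,\ind{\Lambda\ge \mu/2}$ when $\mu\ge C$---but that is not what you wrote.) The heavy/light split you mention is indeed the ingredient that usually rescues such second-moment arguments, but it is stated as an intention rather than carried out, and once you do it the Chebyshev step becomes unnecessary anyway.

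For reference, there is a short argument that avoids case analysis on the weights entirely and handles both regimes at once: from $\min(y,1)\ge 1-e^{-y}$ one has $\min(\Lambda,C)\ge C\bigl(1-e^{-\Lambda/C}\bigr)$, and independence plus the concavity bound $1-e^{-\lambda/C}\ge (1-e^{-1/C})\lambda$ for $\lambda\in[0,1]$ give $\mathbb{E}[e^{-\Lambda/C}]\le \exp\!\bigl(-(1-e^{-1/C})\,\mathbb{E}[\Lambda]\bigr)$. Since $C(1-e^{-1/C})$ is increasing in $C$ and equals about $0.478$ at $C=1/(e-1)$, a two-line calculation yields $\mathbb{E}[\min(\Lambda,C)]\ge 0.3\cdot\min(\mathbb{E}[\Lambda],C)$, which is both simpler than the proposed route and far better than $1/168$.
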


\section*{Acknowledgements}
Anupam Gupta is supported in part by NSF awards CCF-2224718 and CCF-2422926. Marco Molinaro is supported in part by the Coordenação de Aperfeiçoamento de Pessoal de Nível Superior - Brasil (CAPES) - Finance Code 001, and by Bolsa de Produtividade em Pesquisa $\#3$02121/2025-0 from CNPq. Matteo Russo is partially supported by the FAIR (Future Artificial Intelligence Research) project PE0000013, funded by the NextGenerationEU program within the PNRR-PE-AI scheme (M4C2, investment 1.3, line on Artificial Intelligence), and by the MUR PRIN grant 2022EKNE5K (Learning in Markets and Society). This work was done when M.R.\ was visiting New York University.

{\small
  \bibliographystyle{alpha}
  \bibliography{references,dblp}
}

\end{document}